\renewcommand{\(}{\left ( }
\renewcommand{\)}{\right ) }
\renewcommand{\H}{{\mathcal H}}
\newcommand{\E}{{\cal E}}
\renewcommand{\L}{{\cal L}}
\newcommand{\C}{\mathbb {C}}
\newcommand{\R}{\mathbb {R}}
\newcommand{\N}{\mathbb {N}}
\newcommand{\Z}{\mathbb {Z}}
\def\i<#1>{\langle #1 \rangle}
\def\l<#1>{\left\langle #1 \right\rangle}
  \newtheorem{Theorem}{Theorem}[section]
  \newtheorem{Proposition}[Theorem]{Proposition}
  \newtheorem{Lemma}[Theorem]{Lemma}
  \newtheorem{Corollary}[Theorem]{Corollary}
  \newtheorem{Remark}[Theorem]{Remark}
  \newtheorem{Example}[Theorem]{Example}
  \newtheorem{Assumption}{Hypothesis}
\def\@thesis{}
 \def\id#1{\def\@id{#1}}
 \def\department#1{\def\@department{#1}}
\def\@maketitle{
 \begin{center}
 {\large \@title \par}%
 \vspace{5mm}
 {\@author \par}%
\vspace{5mm}

\end{center}
 \par\vskip 1.5em
 }
\title{{\bf Localization for a one-dimensional 
split-step quantum walk \\ with bound states 
robust against perturbations}}
  \author{Toru Fuda\footnote{Department of Mathematics and Science, School of Science and Engineering, Kokushikan University, 4-28-1, Setagaya, Setagaya-Ku, Tokyo 154-8515, Japan,
  		\\E-mail: fudat@kokushikan.ac.jp},\ \ 
  	Daiju Funakawa\footnote{Department of Electronics and Information Engineering, Hokkai-Gakuen University, Sapporo 062-8605, Japan,
  		\\E-mail:funakawa@hgu.jp},\ \ 
  	Akito Suzuki\footnote{Division of Mathematics and Physics, Faculty of Engineering, Shinshu University, Wakasato, Nagano
  		\\380-8553, Japan, E-mail: akito@shinshu-u.ac.jp}}
\begin{document}
	
\maketitle
	
\begin{abstract}
For given two unitary and self-adjoint operators on a Hilbert space, 
a spectral mapping theorem was proved in \cite{HiSeSu}. 
In this paper,
as an application of the spectral mapping theorem, 
we investigate the spectrum of a one-dimensional split-step quantum walk. 
We give a criterion for when there is no eigenvalues around $\pm 1$ 
in terms of a discriminant operator. 
We also provide a criterion for when eigenvalues $\pm 1$ exist  
in terms of birth eigenspaces. 
Moreover, we prove that eigenvectors
from the birth eigenspaces decay exponentially at spatial infinity
and that the birth eigenspaces are robust against perturbations. 
\end{abstract}

\section{Introduction}
During the last two decades, 
increasing attention has been paid to discrete-time quantum walks
(see \cite{Am03, Ke07, Ko08, VA12, P13, MaWa14} and references therein), 
which are quantum counterparts of classical random walks. 
Motivated by Grover's search algorithm \cite{Gr96}, 
Szegedy \cite{Sz} quantized  a random walk on a finite bipartite graph,
define an evolution operator as a product of two unitary and self-adjoint 
operators, and compute its spectrum from the transition probabilities
of the random walk. 
The bipartite walk was updated in \cite{MNRS07, MNRS09}
and then reformulated in \cite{Se, HKSS14} as a quantum walk 
on a digraph (without assuming bipartiteness). 
Nowadays, such a generalization is called the (twisted) Szegedy walk
\cite{HKSS13, HKSS14, HS15},
which as a special case includes the Grover walk \cite{Se, Wa}. 
The Szegedy walk on a symmetric digraph $G = (V, D)$ is described 
by the evolution operator $U = SC$,
which is a product of two unitary self-adjoint operators 
$S$ and $C$ on the Hilbert space $\ell^2(D)$
of square summable functions on directed edges $D$. 
Here $S$ and $C$ is called the shift and coin operators. 
Moreover, $C$ can be expressed as $2d^*d-1$,
where $d:\ell^2(D) \to \ell^2(V)$ is coisometry,
{\it i.e.}, $d d^* = 1$,
and is called a boundary operator.
A remarkable feature of the Szegedy walk is
that the spectrum $\sigma(U)$ of $U$ 
can be expressed 
in terms of the discriminant operator $T = dSd^*$ 
and the birth eigenspaces
$\mathcal{B}_\pm = \ker d \cap \ker (S\pm 1)$ as 
\begin{equation}
\label{sgmU} 
\sigma(U) 
	= \varphi^{-1}(\sigma(T)) \cup \{+1\}^{M_+} \cup \{-1\}^{M_-},
\end{equation}
where $\varphi(z) = (z+z^{-1})/2$ and
$M_\pm = \dim \mathcal{B}_\pm$ denotes the cardinality
of the set $\{\pm 1\}$
with the convention $\{\pm 1\}^{M_\pm} = \emptyset$ when $M_\pm=0$. 
This statement is called the spectral mapping theorem 
of quantum walks \cite{Se, HKSS14}
and $\varphi^{-1}(\sigma(T))$ is called the inherited part
\cite{MOS17, HS18}.
In the case of the Grover walk, 
the discriminant operator $T$ is unitarily equivalent to 
the transition probability operator $P$ of the symmetric random walk 
on the graph where the Grover walk itself is defined.  
Hence, the quantized evolution $U$  inherits the spectrum form 
the transition probability operator $P$ of the classical random walk. 
In \cite{HKSS14}, the multiplicities $M_\pm$ were
characterized in terms of graph structure. 

The spectral mapping theorem was extended to 
a more general setting in \cite{HiSeSu, SeSu}. 
Let $d$ be coisometry
from a Hilbert space $\mathcal{H}$ 
to another Hilbert space $\mathcal{K}$ 
and $S$ be a unitary and self-adjoint operator on 
$\mathcal{H}$. 
Then $U := S(2d^*d-1)$ and $T := d Sd^*$ satisfy \eqref{sgmU}. 
The spectral mapping theorem of this form can be applied 
to the spectral analysis of various types of quantum walks. 
Actually, in a previous paper \cite{FFS17}, 
the authors of the current paper 
used it for analyzing 
a $d$-dimensional split-step quantum walk,
which was a unified model 
including Kitagawa's split-step quantum walks \cite{Ki} and 
$d$-dimensional quantum walks \cite{MBSS02,TFMK03,IKK04}. 
In particular, 
the authors performed the spectral analysis of
the inherited part from the discriminant operator $T$
and provided a criterion for $T$ and hence $U$ to have eigenvalues. 

In this paper, 
we perform the analysis of the birth eigenspaces $\mathcal{B}_\pm$
of the one-dimensional split-step quantum walk. 
We provide a criterion for when $\mathcal{B}_\pm$ is nontrivial.  
Moreover, we prove that the norm of vectors in $\mathcal{B}_\pm$ 
(if exists) decay exponentially at spatial infinity
and show the robustness of $\mathcal{B}_\pm$ against perturbations.  
Here we note that the criterion for the nontriviality of $\mathcal{B}_\pm$
is given in terms of
the asymptotic behavior of local coins $C(x)$ as $x$ tends to $\pm \infty$. 
This adapts to  two phase quantum walks \cite{Enetal15,Enetal16}
and anisotropic quantum walks \cite{RiSuTi17a, RiSuTi17b}
and leads us to define a topological index
such as those introduced in \cite{Ki, AsOb13}. 
In a forthcoming paper \cite{Su18},
the third author studies 
such an index in terms of supersymmetric quantum mechanics. 
%

The spectral analysis of the quantum walk is of particular interest,   
because the asymptotic behavior is governed by 
the spectral properties of the evolution operator $U$. 
The presence of an eigenvalue ensures
that localization occurs if and only if the initial state has an overlap 
with its eigenvector \cite{KoLuSe13, SeSu}. 
Hence, if $\mathcal{B}_\pm$ is nontrivial, 
the localization can occur. 
A weak limit theorem originated from Konno \cite{Ko02, Ko05} 
(see also \cite{GrJaSc04})
says that at large time $t$,
the position $X_t$ of the quantum walker
behaves like $X_t \sim t V$,
where $V$ is interpreted as the asymptotic velocity.  
As put into evidence in \cite{Su15, RiSuTi17b}, 
if $U$ is asymptotically homogeneous, 
then 
the distribution $\mu_V$ of $V$ is given by
\[ \mu_V(dv) = \|\Pi_{\rm p}(U) \Psi_0\|^2 \delta_0(dv)
	+\|E_{\hat v}(dv) \Pi_{\rm ac}(U) \Psi_0\|^2. \]
Here $\Pi_{\rm p}(U)$ and  $\Pi_{\rm ac}(U)$ are
orthogonal projections onto the eigenspaces and the subspace of 
absolute continuity for $U$, 
$E_{\hat v}$ is the spectral measure of the velocity operator $\hat v$,
and $\Psi_0$ is the initial state. 
This statement is based on the fact that 
$U$ has no singular continuous spectrum
\cite{AsBoJo15} (see also \cite{RiSuTi17a}). 
A weak limit theorem 
for the one-dimensional split-step quantum walk will be reported 
in a subsequent paper \cite{FFS18b}.  

This paper is organized as follows. 
In Section 2, we define 
a shift operator $S$ and a coin operator $C$
so that both are unitary and self-adjoint 
on $\ell^2(\mathbb{Z};\mathbb{C}^2)$. 
The coin operator $C$ is also assumed to be 
the multiplication operator by 
unitary and self-adjoint matrices $C(x) \in M(2;\mathbb{C})$
($x \in \mathbb{Z}$). 
The evolution operator of the split-step quantum walk 
is defined as $U =SC$ and 
the state of a walker at time $t$ is given by 
$\Psi_t = U^t \Psi_0$, 
where $\Psi_0$ is the initial state of the walker. 
Then the state evolution is 
governed by
\begin{equation*}
\Psi_{t+1}(x) = P(x+1)\Psi_t(x+1) + Q(x-1)\Psi_t(x-1) + R(x)\Psi_t(x), 
	\quad x \in \mathbb{Z}, \ t =0, 1, 2, \ldots, 
\end{equation*}
where $P(x)$, $Q(x)$, and  $R(x) \in M(2; \mathbb{C})$
are determined by $S$ and $C$.  
In Examples \ref{Atype} and \ref{ex:SS},
we see that $U$ becomes the standard one-dimensional quantum walk
and Kitagawa's split-step quantum walk \cite{Ki}
as special cases.

In Section 3, we see that 
the spectral mapping theorem \eqref{sgmU}
is applicable to the split-step quantum walk
and we give an explicit expression of the discriminant operator $T$
in terms of eigenvectors of $C(x)$ (Lemma \ref{exT}). 
Here we also provide a criterion for when $T$ has no eigenvalues 
around $\pm 1$ (Theorem \ref{thm:eig}).

In Section 4, we introduce positive constants $B_\pm$ and $b_\pm$ 
and prove that: 
if $B_\pm < 1$, then $\dim \mathcal{B}_\pm = 1$;
if $b_\pm > 1$, then $\mathcal{B}_\pm$ is a trivial subspace
(Theorem \ref{thm:3}). 
Here, the constants 
$B_\pm$ and $b_\pm$ are defined in terms of 
the asymptotic behavior of local coins $C(x)$ as $x$ tends to $\pm \infty$. 

In Section 5, 
we prove two characteristic properties
of vectors in $\mathcal{B}_\pm$. 
In Subsection 5.1, 
we show that
if $B_\pm < 1$, then $\Psi \in \mathcal{B}_\pm$ exhibits 
an exponential decay, {\it i.e.},
there exist positive constants 
$c_\pm$, $c_\pm^\prime$, 
$\kappa_\pm$, and $\kappa_\pm^\prime$,
such that 
\begin{equation} 
\label{eq_bounds0}
\kappa_\pm^\prime e^{- c_\pm^\prime |x|} 
\leq \|\Psi(x)\|_{\mathbb{C}^2}^2 \leq \kappa_\pm e^{- c_\pm |x|},
\quad |x| \geq R_\pm
\end{equation}
with some $ R_\pm$ sufficiently large.
Let $X_t$ be the random variable denoting the position of 
the quantum walker at time $t$. Then 
the probability distribution of $X_t$ is given by
\begin{equation}
\label{prb0} 
P(X_t = x) = \|\Psi_t(x)\|_{\mathbb{C}^2}^2,
	\quad x \in \mathbb{Z}, \ t = 0,1,2, \ldots,
\end{equation}
where $\Psi_0$ is the initial state. 
Combining \eqref{eq_bounds0}
and \eqref{prb0} 
yields the fact that $P(X_t = x)$ decays exponentially 
for the initial state $\Psi_0 \in \mathcal{B}_\pm$. 
In Subsection 5.2, 
we show that $\mathcal{B}_\pm$ is robust 
against local perturbations of $C(x)$. 
To this end, 
we consider two local coins $C(x)$ and $C^\prime(x)$ that 
satisfy 
$\lim_{x \to \pm \infty} C^\prime(x) = \lim_{x \to \pm \infty} C(x) = :C_{\pm \infty}$,
{\it i.e.}, the difference between $C^\prime(x)$ and $C(x)$ 
vanish at spatial infinity. 
Hence, we can regard $C$ and $C^\prime$ as  
an unperturbed coin and a perturbed coin. 
We use $\mathcal{B}_\pm(C)$ for $\mathcal{B}_\pm$ 
to make the dependence on $C$ explicit. 
We introduce constants $\beta_\pm$ 
determined only by $C_\pm$ and prove that
if $\beta_\pm < 1$, then $\dim \mathcal{B}_\pm(C) = \dim \mathcal{B}_\pm(C^\prime) = 1$
(Theorem \ref{thm:5.2}). 
This implies that $\mathcal{B}_\pm$ are robust against
perturbations that vanish at spatial infinity.

In Section 6,
we give two examples.  
Tne first one is an anisotropic quantum walk.
The second one is Kitagawa's split-step quantum walk. 
In these examples,
we see that the following three cases are possible:
(i) $\dim \mathcal{B}_+ = \dim \mathcal{B}_- = 1$;
(ii) $ \mathcal{B}_+ = \mathcal{B}_- = \{0\}$;
(iii) $\dim \mathcal{B}_\pm = 1$ and $\mathcal{B}_\mp = \{0\}$.

\section{Definition of the model}
Let 
\[ \mathcal{H} := \ell^2(\mathbb{Z}; \mathbb{C}^2) 
	= \{ \Psi :\Z \to \C^2 \mid 
	\sum_{x \in \mathbb{Z}} \|\Psi(x)\|_{\mathbb{C}^2}^2 < \infty \}
\] 
be the Hilbert space of states and define a shift operator $S$
and a coin operator $C$ on $\mathcal{H}$ as follows. 
For a vector
$\Psi = \begin{pmatrix} \Psi_1 \\ \Psi_2 \end{pmatrix} \in \mathcal{H}$ and $x \in \mathbb{Z}$,
\begin{equation}
\label{def_shift} 
(S\Psi)(x) = \begin{pmatrix} 
	p \Psi_1(x)+ q \Psi_2(x+1) \\ 
	\bar{q} \Psi_1(x-1)- p\Psi_2(x) \end{pmatrix},
\end{equation}
where
$(p,q) \in \R \times \C$ satisfies $p^2 + |q|^2 = 1$. 
Then, $S$ is unitary and self-adjoint. 
Let $\{C(x)\}_{x \in \mathbb{Z}} \subset U(2)$ be a family of unitary
and self-adjoint matrices such that
\begin{equation}
\label{def_coin} 
C(x) 
	= \begin{pmatrix} a(x) & b(x) \\
		\overline{b(x)} & -a(x) \end{pmatrix},
\end{equation}
where
$a(x) \in \mathbb{R}$ and $a(x)^2 + |b(x)|^2 =1$. 
Since, by \eqref{def_coin}, ${\rm tr}\, C(x) = 0$ 
and ${\rm det}\, C(x) = -1$,
we have $\ker(C(x) \pm 1) = 1$.
For $\Psi \in \H$, $C\Psi$ is given by
\[ (C \Psi)(x) = C(x) \Psi(x), \quad x \in \mathbb{Z}. \]
Then, $C(x)$ is unitary and self-adjoint and so is $C$. 
We now define an evolution operator as 
\[ U = SC. \] 

Let $\Psi_0 \in \H$ ($\|\Psi_0\|=1$) 
be the initial state of a quantum walker.
We define the state of the walker at time $t \in \N$ as
$\Psi_t = U^t \Psi_0$ and
we obtain the state evolution
\begin{equation}
\label{lazy} 
\Psi_{t+1}(x) = P(x+1)\Psi_t(x+1) + Q(x-1)\Psi_t(x-1) + R(x)\Psi_t(x), 
	\quad x \in \mathbb{Z}, 
\end{equation}
where
\begin{gather*}
P(x) = q \begin{pmatrix}\overline{b(x)} & - a(x) \\ 0 & 0 \end{pmatrix},
\quad  
Q(x) = \bar q \begin{pmatrix} 0 & 0 \\ a(x) & b(x) \end{pmatrix}, \quad
R(x) = p \begin{pmatrix} 
	a(x) & b(x) \\ - \overline{b(x)} & a(x)
	\end{pmatrix}. 
\end{gather*}
From \eqref{lazy}, 
this walk is interpreted as a lazy quantum walk. 
We emphasize that 
our walk is defined as a two-state quantum walk 
on $\ell^2(\mathbb{Z};\mathbb{C}^2)$,
whereas standard lazy quantum walks \cite{IKS05}, \cite{LMZZ15} 
are defined  as  a three-state quantum walk 
on $\ell^2(\mathbb{Z};\mathbb{C}^3)$.
 
Our evolution $U$ partially covers several examples of
one-dimensional two-sate quantum walks 
as seen below. 
\begin{Example}[Ambainis-type QW]
\label{Atype}
		{\rm
		In the one-dimensional quantum walk defined by 
		Ambainis \cite{Am03}, 
		the shift operator
		$S_{\rm A}$ is defined as
		\[ (S_{\rm A} \Psi)(x) 
			= \begin{pmatrix} \Psi_1(x+1) \\ \Psi_2(x-1) \end{pmatrix},
			\quad x \in \mathbb{Z}, \ \Psi \in \mathcal{H}.  \]
		Let $C(x)$ be of the form \eqref{def_coin}
		and set
		 $\tilde C(x) = \sigma_1 C(x)$,
		where
		  $\sigma_1 
			= \begin{pmatrix} 0 & 1 \\ 1 & 0 \end{pmatrix}$.
		Define an evolution operator $U_{\rm A}$ as 
		$U_{\rm A} = S_{\rm A} \tilde C$.
			%
		Then 
		\[ (S_{\rm A}\sigma_1 \Psi)(x) 
			= \begin{pmatrix} \Psi_2(x+1) \\ \Psi_1(x-1) \end{pmatrix},
			\quad \Psi \in \mathcal{H}. \]
		Let $S$ satisfy \eqref{def_shift} with $p=0$ and $q=1$. 
		Then $U$ becomes $U_{\rm A}$. Indeed, 
		$S = S_{\rm A} \sigma_1$ and 
		\[ U = SC = (S_{\rm A} \sigma_1) (\sigma_1 \tilde C) = 
		U_{\rm A}. \]
	We emphasize that the evolution $U_{\rm A}$ is unitarily 
	equivalent to standard quantum walks (see \cite{Oh} for more information). 
		}
	\end{Example}
\begin{Example}[Split-step QW]
	\label{ex:SS}
	{\rm
	Let $S_+$ and $S_-$ be shift operators defined as
	\[ (S_+ \Psi)(x) 
			= \begin{pmatrix} \Psi_1(x-1) \\ \Psi_2(x) \end{pmatrix},
			\quad (S_- \Psi)(x) 
			= \begin{pmatrix} \Psi_1(x) \\ \Psi_2(x+1) \end{pmatrix},
			\quad x \in \mathbb{Z}, \ \Psi \in \mathcal{H}. \]
	The evolution $U_{\rm ss}(\theta_1, \theta_2)$ 
	of the split-step quantum walk introduced by Kitagawa 
	{\it et al} \cite{Ki} is defined as
	\[ U_{\rm ss}(\theta_1, \theta_2)
			= S_- R(\theta_2) S_+ R(\theta_1), \]
			where $\theta_1, \theta_2 \in [0,2\pi)$ and
	\[ R(\theta) 
			= \begin{pmatrix} \cos(\theta/2) & -\sin(\theta/2) \\
			\sin(\theta/2) & \cos(\theta/2) \end{pmatrix}. \]
	By direct calculation, 
	\[ (\sigma_1 S_-  R(\theta)S_+\Psi)(x)
= \begin{pmatrix} 
			p_\theta \Psi_1(x) +  q_\theta \Psi_2(x+1) \\
			q_\theta \Psi_1(x-1) - p_\theta \Psi_2(x) 
			\end{pmatrix}
	\quad 
	\mbox{with $p_\theta = \sin(\theta/2)$ 
		and $q_\theta = \cos(\theta/2)$}\]
			and
	\[ R(\theta) \sigma_1 
		= \begin{pmatrix}  a_\theta & b_\theta  \\
		b_\theta & -a_\theta  \end{pmatrix}
	\quad 
	\mbox{with $a_\theta = - \sin(\theta/2)$ and $b_\theta = \cos(\theta/2)$}. 
			\]
When $p =  p_{\theta_2}
$ and $q = q_{\theta_2}
$,
			$S = \sigma_1 S_-  R(\theta_2)S_+$.
Let $C(x) = R(\theta_1) \sigma_1$. 
			Then, 
			\[ U = SC = \sigma_ 1 U_{\rm ss}(\theta_1,\theta_2) \sigma_1. \]
			Hence, $U$ and $U_{\rm ss}(\theta_1,\theta_2)$ are 
			unitarily equivalent. 
		}
	\end{Example}

We call the quantum walk with the evolution $U$
a {\it split-step quantum walk},
because it is a generalization of Kitagawa's split-step quantum walk
in the sense of Example \ref{ex:SS}. 
Throughout this paper, 
we consider the shift oprator $S$ and coin operator $C$ defined by
\eqref{def_shift} and \eqref{def_coin}
unless otherwise stated. 

\section{Spectral mapping theorem}
In this section, we apply spectral mapping techniques 
obtained in \cite{HiSeSu, SeSu} 
for the product of two unitary self-adjoint operators.  
Since the shift $S$ and coin $C$ are unitary and self-adjoint,
these techniques can be applied to the evolution $U$ 
of the split step quantum walk. 
To applying these techniques,
we need to define a coisometry operator 
\[ 
d:\H \rightarrow \ell ^2(\Z) = :\mathcal{K}
\] 
so that $C=2d^*d-1$.
Since $\dim \ker \( C(x)-1\)=1$, 
we can choose a unique normalized vector 
$\chi (x)=\begin{pmatrix}
	\chi _1(x)\\
	\chi _2(x)
	\end{pmatrix}\in \ker \( C(x)-1 \)$
up to a constant factor. 
We now define an operator $d:\mathcal{H} \to \mathcal{K}$ as 
\begin{equation}
\label{eq:3.1}
\(d\Psi \)(x)
=\i<\chi (x),\Psi (x)>_{\C^2},\quad x\in \Z
\quad \mbox{for $\Psi \in \H$}.
\end{equation}
We use ${\rm id}_\mathcal{K}$ to denote
the identity on $\mathcal{K}$. 
\begin{Lemma}
Let $d$ be as above. 
\begin{itemize}
\item[(1)] $d$ is bounded and its adjoint 
$d^*:\mathcal{K} \to \mathcal{H}$ is given by 
\[ d^*\psi = \psi \chi \quad \mbox{for $\psi \in \mathcal{K}$}.  \] 
\item[(2)] $d$ is coisometry, {\it i.e.}, 
	$d d^* = {\rm id}_\mathcal{K}$.
\item[(3)] $C = 2d^*d-1$. 
\end{itemize}
\end{Lemma}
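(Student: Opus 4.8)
The plan is to verify all three assertions fiberwise, exploiting that $d$, $d^*$, and $C$ each act on $\H = \ell^2(\Z;\C^2)$ as multiplication in the variable $x$ by a map on the single fiber $\C^2$; every claim then reduces to rank-one linear algebra on $\C^2$ together with a summation over $x$. For the boundedness in part (1), I would start from the fact that $\chi(x)$ is normalized, so the Cauchy--Schwarz inequality gives $|(d\Psi)(x)|^2 = |\langle \chi(x),\Psi(x)\rangle_{\C^2}|^2 \le \|\Psi(x)\|_{\C^2}^2$ for each $x$. Summing over $x \in \Z$ yields $\|d\Psi\|_\K^2 \le \|\Psi\|_\H^2$, which simultaneously shows that $d\Psi$ is a genuine element of $\ell^2(\Z)$ and that $\|d\| \le 1$. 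To identify the adjoint, I would expand $\langle d\Psi,\psi\rangle_\K$ as a sum over $x$ and reinterpret each summand as the $\C^2$-inner product of $\Psi(x)$ against $\psi(x)\chi(x)$; the bound just proved legitimizes the rearrangement. This gives $\langle d\Psi,\psi\rangle_\K = \langle \Psi, \psi\chi\rangle_\H$, i.e. $d^*\psi = \psi\chi$.

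For part (2), I would simply apply $d$ to $d^*\psi = \psi\chi$ and use normalization of $\chi(x)$: $(dd^*\psi)(x) = \langle \chi(x), \psi(x)\chi(x)\rangle_{\C^2} = \psi(x)\|\chi(x)\|_{\C^2}^2 = \psi(x)$, so $dd^* = \mathrm{id}_\K$. For part (3), the key observation is that $d^*d$ acts fiberwise as the orthogonal projection onto $\chi(x)$: indeed $(d^*d\Psi)(x) = (d\Psi)(x)\,\chi(x) = \langle \chi(x),\Psi(x)\rangle_{\C^2}\,\chi(x) = P_+(x)\Psi(x)$, where $P_+(x)$ is the rank-one projection onto $\ker(C(x)-1) = \mathrm{span}\{\chi(x)\}$. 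It then remains only to confirm that $2P_+(x) - 1 = C(x)$. This follows from the spectral structure of $C(x)$ already recorded in the excerpt: since $\mathrm{tr}\,C(x) = 0$ and $\det C(x) = -1$, the matrix $C(x)$ is a self-adjoint unitary involution with simple eigenvalues $\pm 1$, so $C(x) = P_+(x) - P_-(x)$ with $P_+(x) + P_-(x) = 1$, whence $C(x) = 2P_+(x) - 1$. Reading this identity off fiberwise gives $C = 2d^*d - 1$.

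The computations are elementary and I do not anticipate a genuine obstacle; the only point requiring care is fixing a consistent convention for the sesquilinear form (antilinear in the first slot) so that $d$ is genuinely linear and the adjoint formula in (1) comes out exactly as stated. Once that convention is pinned down, the adjoint verification in (1) and the projection identity in (3) are immediate, and the conceptual heart of the lemma is the spectral decomposition $C(x) = 2P_+(x) - 1$ used in part (3).
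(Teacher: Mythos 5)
Your proposal is correct and follows essentially the same route as the paper: fiberwise Cauchy--Schwarz for boundedness, the summation identity $\langle \psi, d\Psi\rangle_{\mathcal K} = \langle \psi\chi, \Psi\rangle_{\mathcal H}$ for the adjoint, the direct evaluation $(dd^*\psi)(x)=\psi(x)$ for coisometry, and the identification of $d^*d$ with the fiberwise rank-one projection $|\chi(x)\rangle\langle\chi(x)|$ for part (3). The only difference is cosmetic: the paper simply cites the identity $C(x)=2|\chi(x)\rangle\langle\chi(x)|-1$, whereas you derive it from ${\rm tr}\,C(x)=0$ and $\det C(x)=-1$ via the spectral decomposition $C(x)=P_+(x)-P_-(x)$, which is a harmless (and slightly more self-contained) elaboration.
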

\begin{proof}
Because $\chi(x)$ is a normalized vector in $\mathbb{C}^2$,
\begin{align*}
\|d \Psi \|_\mathcal{K}^2
 = \sum_{x \in \mathbb{Z}}
 	\left|\langle \chi (x), \Psi(x) \rangle_{\mathbb{C}^2}\right|^2
\leq \|\Psi\|_\mathcal{H}^2
\quad \mbox{for $\Psi \in \mathcal{H}$}. 
\end{align*}
Hence, $d$ is bounded. 
For 
$\psi \in \mathcal{K}$,
\begin{align*}
\langle \psi, d\Psi \rangle_\mathcal{K}
= \sum_{x \in \mathbb{Z}} 
		\bar\psi(x) \langle \chi(x), \Psi(x) \rangle_{\mathbb{C}^2}
= \langle \psi \chi, \Psi \rangle_\mathcal{H},
\end{align*}
which completes (1). 

By direct calculation,
\[ (dd^* \psi)(x)
	= \langle \chi(x), (d^*\psi)(x) \rangle_{\mathbb{C}^2}
	= \psi(x), \quad x \in \mathbb{Z}, 
\]
which implies that $d$ is coisometry. 
Hence, (2) is proved. 

Because 
$d^*d 
= \sum_{x \in \mathbb{Z}} |\chi(x)\rangle \langle \chi(x)|$
and
$
	C(x)=2|\chi (x)\rangle \langle \chi (x)|-1
$, 
we obtain (3).  
\end{proof}

With the terminology of \cite{Sz, HS15}, 
we call $T = d S d^*$ the {\it discriminant} of $U$ 
and 
\begin{equation} 
\label{birth}
\mathcal{B}_\pm 
	= \ker d \cap \ker (S \pm 1) 
\end{equation}
the {\it birth eigenspaces}. 
We set
$M_\pm  = \dim \mathcal{B}_\pm$.  
Let $\varphi$ be
the Joukowsky transformation:
\[ \varphi(z) = \frac{z + z^{-1}}{2}, \]
which maps 
$S^{1}:= \{ z \in \mathbb{C} \mid |z|=1 \}$ 
onto $[-1,1]$. 
We use $\sigma(A)$ to denote the spectrum of an operator $A$
and $\sigma_{\rm p}(A)$ the point spectrum of $A$. 
The following proposition is a direct consequence of \cite{HiSeSu}.
\begin{Proposition}[Spectral mapping theorem \cite{HiSeSu}] 
\label{prop:smt}
Let $U$, $T$, $d$,  $M_\pm$,  and $\varphi$ as above. 
\begin{itemize}
\item[(1)] $T$ is bounded and self-adjoint on $\mathcal{K}$ with $\|T\|  \leq 1$. 
\item[(2)] $\sigma(U) 
	= \varphi^{-1}(\sigma(T)) \cup \{1\}^{M_+} \cup \{-1\}^{M_-}$.
\item[(3)] $\sigma_{\rm p}(U) 
	= \varphi^{-1}(\sigma_{\rm p}(T)) \cup \{1\}^{M_+} \cup \{-1\}^{M_-}$. 
\item[(4)] $\dim \ker (U\mp 1) 
= M_\pm + m_\pm$, where $m_\pm = \dim \ker (T\mp 1)$. 
\end{itemize}
\end{Proposition}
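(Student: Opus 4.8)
The plan is to recognize Proposition~\ref{prop:smt} as a special case of the abstract spectral mapping theorem of \cite{HiSeSu}, whose hypotheses are exactly the content of the preceding lemma: $d$ is a coisometry ($dd^* = \mathrm{id}_\mathcal{K}$), $S$ is unitary and self-adjoint, and $C = 2d^*d - 1$, so that $U = SC$ and $T = dSd^*$. Rather than merely invoke the cited result, I would reconstruct the algebraic identities on which the argument rests, namely $dC = d$, $Cd^* = d^*$, and consequently $T = dSd^* = dUd^*$; each follows in one line from $dd^* = \mathrm{id}_\mathcal{K}$. Part~(1) is then immediate: $T^* = dS^*d^* = dSd^* = T$ by self-adjointness of $S$, while $\|T\| \le \|d\|\,\|S\|\,\|d^*\| = 1$ because $d^*$ is an isometry and $S$ is unitary.

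For the point-spectrum correspondence away from the branch points I would argue as follows. Fix $\lambda \in S^1$ with $\lambda \ne \pm 1$ and let $\Psi \in \ker(U - \lambda)$. Applying $d$ to $C\Psi = \lambda S\Psi$ and using $dC = d$ gives $dS\Psi = \lambda^{-1} d\Psi$; applying $d$ to $SC\Psi = \lambda\Psi$ and using $dSC = 2Td - dS$ gives $2T\,d\Psi - dS\Psi = \lambda\, d\Psi$. Eliminating $dS\Psi$ yields $T\,d\Psi = \varphi(\lambda)\, d\Psi$. One checks that $d\Psi \ne 0$ (if $d\Psi = 0$ then $C\Psi = -\Psi$, forcing $S\Psi = -\lambda\Psi$, impossible for $\lambda \ne \pm 1$ since $\sigma(S) \subseteq \{\pm 1\}$), so $d$ maps $\ker(U - \lambda)$ injectively into $\ker(T - \varphi(\lambda))$. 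Conversely, for $\psi \in \ker(T - \varphi(\lambda))$ the explicit vector $\Psi := (1 - \lambda S)d^*\psi$ satisfies $U\Psi = \lambda\Psi$ and $d\Psi = \tfrac{1}{2}(1 - \lambda^2)\psi \ne 0$, which shows $d$ is also surjective. Hence $d$ is a bijection between the two eigenspaces and $\dim\ker(U-\lambda) = \dim\ker(T - \varphi(\lambda))$ for every $\lambda \ne \pm 1$, establishing the inherited part of~(3).

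The branch points require a separate and more delicate treatment. Consider $\lambda = 1$ (the case $\lambda = -1$ is identical after swapping the roles of $\mathcal{H}_\pm = \ker(S \mp 1)$). For $\Psi \in \ker(U - 1)$ the identity $C\Psi = S\Psi$ rearranges to $(1 + S)\Psi = 2 d^*\psi$ with $\psi := d\Psi$, which forces the $S = +1$ component of $\Psi$ to equal $d^*\psi$ and the $S = -1$ component to lie in $\ker d$, hence in $\mathcal{B}_+ = \ker d \cap \ker(S + 1)$. The subtle point is to show that $d$ maps $\ker(U - 1)$ \emph{onto} $\ker(T - 1)$: given $T\psi = \psi$ one has $\langle S d^*\psi, d^*\psi\rangle = \langle T\psi, \psi\rangle = \|\psi\|^2 = \|d^*\psi\|^2$, and since $S$ acts as $+1$ on $\mathcal{H}_+$ and $-1$ on $\mathcal{H}_-$ this equality forces $d^*\psi \in \mathcal{H}_+$, i.e.\ $S d^*\psi = d^*\psi$; one then checks directly that $\Psi = d^*\psi + \Phi$ solves $U\Psi = \Psi$ for every $\Phi \in \mathcal{B}_+$. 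Thus $d \colon \ker(U-1) \to \ker(T-1)$ is surjective with kernel exactly $\mathcal{B}_+$, and rank--nullity gives $\dim\ker(U - 1) = M_+ + m_+$; this is part~(4), and the analogous computation gives $\dim\ker(U+1) = M_- + m_-$. Part~(3) follows by combining the endpoint count with the bijection above, since $1 \in \sigma_{\mathrm p}(U)$ precisely when $M_+ + m_+ > 0$.

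Finally, part~(2) upgrades the point-spectrum statement to the full spectrum, and is where I expect the main obstacle to lie. Since $U$ is unitary and $T$ is bounded self-adjoint, I would characterize membership in $\sigma(U)$ and $\sigma(T)$ through Weyl (singular) sequences and transport them across $d$ and $d^*$ exactly as in the eigenvector computation, now carrying error terms: from $\|(U - \lambda)\Psi_n\| \to 0$ one derives $\|(T - \varphi(\lambda))d\Psi_n\| \to 0$ with $\|d\Psi_n\|$ bounded below for $\lambda \ne \pm 1$, and conversely one builds an approximate eigenvector of $U$ from $(1 - \lambda S)d^*\psi_n$. The delicate estimate is the uniform lower bound $\liminf_n \|d\Psi_n\| > 0$, which degenerates as $\lambda \to \pm 1$, where $\varphi$ has vanishing derivative; this is the crux of the matter and the reason the two endpoints must be inserted by hand as $\{1\}^{M_+} \cup \{-1\}^{M_-}$. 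Assembling the inherited part $\varphi^{-1}(\sigma(T))$ obtained from these singular sequences with the endpoint contributions from~(4) yields~(2). Since all of this is precisely the content of the abstract theorem in \cite{HiSeSu}, the cleanest route in the paper is to verify the hypotheses and cite it, keeping the computations above as the explicit mechanism behind the statement.
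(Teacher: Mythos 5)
Your proposal is correct and takes essentially the same approach as the paper, which offers no proof of Proposition~\ref{prop:smt} beyond citing \cite{HiSeSu} after the preceding lemma has verified exactly the hypotheses you list ($dd^* = \mathrm{id}_{\mathcal{K}}$, $C = 2d^*d - 1$, $S$ unitary self-adjoint); your supplementary reconstructions --- the identities $dC = d$ and $dSC = 2Td - dS$, the trial vector $(1-\lambda S)d^*\psi$ with $d\Psi = \tfrac{1}{2}(1-\lambda^2)\psi$, and the endpoint analysis yielding $\dim\ker(U \mp 1) = M_\pm + m_\pm$ --- are all accurate. The only minor divergence is in your sketch of part (2): \cite{HiSeSu} actually proves the full spectral statement via the invariant orthogonal decomposition $\mathcal{H} = \overline{\operatorname{Ran} d^* + \operatorname{Ran} Sd^*} \oplus (\mathcal{B}_+ \oplus \mathcal{B}_-)$ and the spectral theorem for $T$, rather than by Weyl sequences, but since you ultimately defer to the citation, as the paper does, this does not affect correctness.
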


We use $L$ to denote the left shift operator on $\mathcal{K}$:
\[ (L \psi)(x) = \psi(x+1), \quad x \in \mathbb{Z}
	\quad \mbox{for $\psi \in \mathcal{K}$}. \]
\begin{Lemma}
\label{exT}
Let $V$ denote the multiplication operator by
\[ V(x) = p\left(|\chi_1(x)|^2 - |\chi_2(x)|^2 \right)  \]
and $D = q \bar \chi_1 L \chi_2$.  
Then 
\[ T = D + D^* + V.  \]
\end{Lemma}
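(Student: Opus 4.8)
The statement is a direct computation of $T = dSd^*$ acting on an arbitrary $\psi \in \mathcal{K}$, using the formula $d^*\psi = \psi\chi$ established in the previous lemma together with the definition of $S$ in \eqref{def_shift}. The plan is to push $\psi$ through the three operators one at a time and then read off the three resulting terms.

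First, $(d^*\psi)(x) = \psi(x)\chi(x)$, whose two $\mathbb{C}^2$-components are $\psi(x)\chi_1(x)$ and $\psi(x)\chi_2(x)$. Next I apply $S$ componentwise according to \eqref{def_shift}, so that the two components of $(Sd^*\psi)(x)$ become $p\psi(x)\chi_1(x) + q\psi(x+1)\chi_2(x+1)$ and $\bar q\,\psi(x-1)\chi_1(x-1) - p\psi(x)\chi_2(x)$. Finally, applying $d$ amounts to taking the $\mathbb{C}^2$-inner product against $\chi(x)$, i.e.\ conjugating $\chi_1(x)$ and $\chi_2(x)$ and pairing them with the respective components. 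Collecting everything yields
\[ (T\psi)(x) = p\bigl(|\chi_1(x)|^2 - |\chi_2(x)|^2\bigr)\psi(x) + q\,\overline{\chi_1(x)}\,\chi_2(x+1)\,\psi(x+1) + \bar q\,\overline{\chi_2(x)}\,\chi_1(x-1)\,\psi(x-1). \]

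It then remains only to match the three terms with $V$, $D$, and $D^*$. The first term is exactly $(V\psi)(x)$ by the definition of $V$. For the second, unwinding the composition $D = q\bar\chi_1 L \chi_2$ (multiply by $\chi_2$, apply the left shift $L$, multiply by $\bar\chi_1$, scale by $q$) gives $(D\psi)(x) = q\,\overline{\chi_1(x)}\,\chi_2(x+1)\,\psi(x+1)$, which is the second term. For the third, I compute the adjoint: since $L^*$ is the right shift and a multiplication operator has adjoint given by conjugating its multiplier, one finds $D^* = \bar q\,\bar\chi_2\, L^*\, \chi_1$, so that $(D^*\psi)(x) = \bar q\,\overline{\chi_2(x)}\,\chi_1(x-1)\,\psi(x-1)$, matching the third term. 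Summing the three gives $T = D + D^* + V$.

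There is no genuine obstacle here beyond careful bookkeeping; the only points demanding attention are keeping the shift directions consistent (the $q$-term couples $x$ to $x+1$, the $\bar q$-term couples $x$ to $x-1$) and tracking which factors are conjugated when passing through the $\chi(x)$ in $d$ as opposed to through $D^*$. As a consistency check, self-adjointness of $T$ from Proposition \ref{prop:smt}(1) is reflected in $V$ being real-valued (so $V = V^*$) and in the two off-diagonal terms being mutual adjoints.
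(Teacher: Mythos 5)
Your proposal is correct and takes essentially the same route as the paper: both compute $dSd^*\psi$ directly from $d^*\psi = \psi\chi$ and the definition of $S$, the only difference being that the paper phrases the same calculation via the block form $S = \left(\begin{smallmatrix} p & qL \\ \bar q L^* & -p \end{smallmatrix}\right)$ on $\mathcal{K}\oplus\mathcal{K}$ while you track components pointwise in $x$. Your shift directions, conjugations, and the identification $D^* = \bar q\,\bar\chi_2 L^* \chi_1$ all check out, so nothing further is needed.
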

\begin{proof}
The proof proceeds along the same lines as the proof of
\cite{FFS17}[Lemma 3.2]. 
Identifying  $\mathcal{H}$ with  $\mathcal{K} \oplus \mathcal{K}$,
we observe that 
$S = \begin{pmatrix}
	p & q L \\
	\bar q L^* & - p
	\end{pmatrix}$.  
Hence, for $\psi \in \mathcal{K}$, 
$S d^* \psi = S \psi \chi 
= \begin{pmatrix}
	(p \chi_1  + q L \chi_2 )\psi \\
	(\bar q L^* \chi_1 - p\chi_2) \psi
	\end{pmatrix}$
and 
\begin{align*}
T\psi 
& = \langle \chi(\cdot), (S d^*\psi)(\cdot) \rangle_{\mathbb{C}^2} \\
& = \bar \chi_1 (p \chi_1  + q L \chi_2) \psi
		+ \bar \chi_2 (\bar q L^* \chi_1 - p\chi_2) \psi \\
& =  q \bar \chi_1 L \chi_2 \psi 
	+ \bar q \bar \chi_2 L^* \chi_1 \psi 
	+  p(|\chi_1|^2 - |\chi_2|^2 ) \psi.  
\end{align*}
This completes the proof. 
\end{proof}
	
In what follows, 
we provide a criterion for when $T$ has no eigenvalues 
$\pm E $ with $E > |V|_\infty := \sup_{x \in \mathbb{Z}} |V(x)|$. 
Because, for such an $E$, $E \mp V(x) > 0$ ($x \in \mathbb{Z}$), 
we can define an operator $K_E$ as
\[ K_E^\pm 
	= \frac{1}{\sqrt{E \mp V}} (D + D^*) 
		 \frac{1}{\sqrt{E \mp V}}.  \]
\begin{Lemma}
\label{lem_egnT}
Let $E$ be as above. 
The following are equivalent. 
\begin{itemize}
\item[(i)] $\pm E \in \sigma_{\rm p}(T)$.
\item[(ii)] $ \pm 1 \in \sigma_{\rm p}(K_E^\pm)$. 
\end{itemize} 
In this case, $\dim \ker(T \mp E) = \dim \ker (K_E \mp 1)$. 
\end{Lemma}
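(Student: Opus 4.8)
The plan is to prove the equivalence of (i) and (ii) by setting up an explicit bijection between the eigenspaces $\ker(T \mp E)$ and $\ker(K_E^\pm \mp 1)$, using the decomposition $T = D + D^* + V$ from Lemma \ref{exT}. Since $E > |V|_\infty$, the operator $E \mp V$ is a strictly positive bounded multiplication operator, so $\sqrt{E \mp V}$ is bounded with bounded inverse $\frac{1}{\sqrt{E \mp V}}$; this is the structural fact that makes the argument a genuine equivalence rather than a one-way implication.

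First I would rewrite the eigenvalue equation. Suppose $\psi \in \ker(T \mp E)$, i.e. $(D + D^*)\psi + V\psi = \pm E \psi$, which rearranges to $(D + D^*)\psi = \pm(E \mp V)\psi = \pm (E \mp V)\psi$. The idea is to factor out $(E \mp V)$ symmetrically. Setting $\phi = \sqrt{E \mp V}\,\psi$ (so that $\psi = \frac{1}{\sqrt{E \mp V}}\phi$, using invertibility), I would substitute and multiply on the left by $\frac{1}{\sqrt{E \mp V}}$ to obtain
\[
\frac{1}{\sqrt{E \mp V}}(D + D^*)\frac{1}{\sqrt{E \mp V}}\,\phi = \pm \phi,
\]
which is exactly $K_E^\pm \phi = \pm \phi$. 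Conversely, the same computation read backwards shows that if $K_E^\pm \phi = \pm \phi$, then $\psi := \frac{1}{\sqrt{E \mp V}}\phi$ satisfies $(T \mp E)\psi = 0$. This establishes both (i)$\Rightarrow$(ii) and (ii)$\Rightarrow$(i).

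For the multiplicity claim, I would observe that the map $\psi \mapsto \sqrt{E \mp V}\,\psi$ is a bounded bijection from $\ker(T \mp E)$ onto $\ker(K_E^\pm \mp 1)$, with bounded inverse $\phi \mapsto \frac{1}{\sqrt{E \mp V}}\phi$; since it is a linear isomorphism (in fact a restriction of an invertible operator on $\mathcal{K}$), the two kernels have equal dimension, giving $\dim\ker(T \mp E) = \dim\ker(K_E^\pm \mp 1)$. One careful point worth flagging is the bookkeeping of the $\pm$ signs: the $+E$ case pairs with $E - V$ and eigenvalue $+1$ of $K_E^+$, while the $-E$ case pairs with $E + V$ and eigenvalue $-1$ of $K_E^-$, so I would verify that both sign choices go through the same algebraic manipulation consistently.

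The main obstacle here is not conceptual but a matter of care: the statement as written uses $K_E$ (without the $\pm$ superscript) in the final sentence while defining $K_E^\pm$, so I would make sure the intended reading is $\dim\ker(T \mp E) = \dim\ker(K_E^\pm \mp 1)$ with matched signs. Beyond that, the only thing to check is that all operators involved are bounded — which follows since $D = q\bar\chi_1 L\chi_2$ is bounded (as $|\chi_i| \le 1$ and $L$ is unitary), $V$ is bounded by $|V|_\infty$, and $\frac{1}{\sqrt{E \mp V}}$ is bounded by $\frac{1}{\sqrt{E - |V|_\infty}}$ — so no domain subtleties arise and the similarity transformation is legitimate on all of $\mathcal{K}$.
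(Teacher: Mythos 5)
Your proof is correct and is essentially the paper's own argument: the paper simply records the factorization $T \mp E = \sqrt{E \mp V}\,(K_E^\pm \mp 1)\,\sqrt{E \mp V}$ and notes the lemma follows, which is exactly the identity underlying your substitution $\phi = \sqrt{E \mp V}\,\psi$ and the resulting bijection of kernels via the invertible operator $\sqrt{E \mp V}$. Your observation about the $K_E$ versus $K_E^\pm$ notation is right — it is a typo in the paper's statement, and your sign bookkeeping matches the intended reading.
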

\begin{proof}
The assertion of the lemma follows from
\[ T \mp E = \sqrt{E \mp V} (K_E \mp 1) \sqrt{E \mp V},
	\quad E > |V|_\infty.  \]
\end{proof}
\begin{Theorem}
\label{thm:eig}
If $E > |q| + |V|_\infty$,
then $\pm E \not\in \sigma_{\rm p}(T)$. 
\end{Theorem}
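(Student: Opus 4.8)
The plan is to reduce the claim to an operator-norm estimate via Lemma~\ref{lem_egnT} and then exploit the pointwise normalization of $\chi(x)$ to obtain a sharp constant. By Lemma~\ref{lem_egnT}, the condition $\pm E \in \sigma_{\mathrm p}(T)$ is equivalent to $\pm 1 \in \sigma_{\mathrm p}(K_E^\pm)$, so it suffices to prove $\pm 1 \notin \sigma_{\mathrm p}(K_E^\pm)$. Since $K_E^\pm$ is bounded and self-adjoint, its spectrum lies in $[-\|K_E^\pm\|,\|K_E^\pm\|]$; hence it is enough to establish the strict bound $\|K_E^\pm\| < 1$, which places $\pm 1$ outside $\sigma(K_E^\pm) \supseteq \sigma_{\mathrm p}(K_E^\pm)$.

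To estimate $\|K_E^\pm\|$ I would use the variational characterization $\|K_E^\pm\| = \sup_{\psi \neq 0} |\langle \psi, K_E^\pm \psi\rangle_{\mathcal K}|/\|\psi\|_{\mathcal K}^2$. Substituting $\phi = (E\mp V)^{-1/2}\psi$ turns the numerator into $|\langle \phi, (D+D^*)\phi\rangle|$ and the denominator into $\sum_{x}(E\mp V(x))|\phi(x)|^2$. Since $E > |V|_\infty$ forces $E \mp V(x) \ge E - |V|_\infty > 0$, the denominator is bounded below by $(E-|V|_\infty)\|\phi\|_{\mathcal K}^2$. Everything therefore hinges on the quadratic-form estimate
\[ |\langle \phi, (D+D^*)\phi\rangle| \le |q|\,\|\phi\|_{\mathcal K}^2. \]

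For this key estimate I would write $\langle\phi,(D+D^*)\phi\rangle = 2\,\mathrm{Re}\,\langle\phi, D\phi\rangle$ and insert the explicit action $(D\phi)(x) = q\,\overline{\chi_1(x)}\,\chi_2(x+1)\,\phi(x+1)$ read off from $D = q\bar\chi_1 L\chi_2$. Applying the Cauchy--Schwarz inequality to $\sum_x\big(|\chi_1(x)|\,|\phi(x)|\big)\big(|\chi_2(x+1)|\,|\phi(x+1)|\big)$, reindexing the shifted factor, and then using the arithmetic--geometric mean inequality reduces the bound to $\tfrac{|q|}{2}\big(\sum_x|\chi_1(x)|^2|\phi(x)|^2 + \sum_x|\chi_2(x)|^2|\phi(x)|^2\big)$. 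At this point the normalization $|\chi_1(x)|^2 + |\chi_2(x)|^2 = 1$ collapses the two weighted sums into $\|\phi\|_{\mathcal K}^2$, producing the factor $|q|$. Combining the two bounds gives $\|K_E^\pm\| \le |q|/(E-|V|_\infty) < 1$ under the hypothesis $E > |q| + |V|_\infty$, which completes the argument.

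The step I expect to be the main obstacle is obtaining the constant $|q|$ rather than $2|q|$ in the quadratic-form estimate: the crude bound $\|D+D^*\| \le 2\|D\| \le 2|q|$ would only yield the weaker threshold $E > 2|q| + |V|_\infty$. The gain of the factor $1/2$ comes precisely from applying AM--GM after Cauchy--Schwarz and then invoking the pointwise identity $|\chi_1(x)|^2 + |\chi_2(x)|^2 = 1$, so the care lies in organizing the off-diagonal terms of $D$ so that this normalization can be used.
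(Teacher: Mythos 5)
Your proposal is correct and takes essentially the same route as the paper: reduce via Lemma~\ref{lem_egnT} to the operator-norm bound $\|K_E^\pm\| < 1$, then estimate the quadratic form using Cauchy--Schwarz, the arithmetic--geometric mean inequality, and the pointwise normalization $|\chi_1(x)|^2 + |\chi_2(x)|^2 = 1$ to get $\|K_E^\pm\| \le |q|/(E - |V|_\infty)$. The paper organizes the same computation by setting $\psi_i = \chi_i (E \mp V)^{-1/2}\psi$ and bounding $|\langle \psi, K_E^\pm \psi\rangle| \le 2|q|\,\|\psi_1\|\,\|\psi_2\| \le |q|\left(\|\psi_1\|^2 + \|\psi_2\|^2\right)$, which is exactly your estimate in slightly different notation.
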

\begin{proof}
By Lemma \ref{lem_egnT}, it suffices to prove that
$\|K^\pm_E\| < 1$. To this end, we consider the rage of
$\langle \psi, K^\pm_E \psi \rangle$ for $\psi \in \mathcal{K}$. 
Let $\psi_i = \frac{\chi_i}{\sqrt{E \mp V}} \psi$ ($i =1,2$). 
Then
\begin{align*}
|\langle \psi, K^\pm_E \psi \rangle|
& = 2 |q| \|\psi_1\| \|\psi_2 \| \\
& \leq |q| \left( \|\psi_1\|^2 +\|\psi_2\|^2 \right). 
\end{align*}
Because $|\chi_1(x)|^2 + |\chi_2(x)|^2 = 1$,
\[ \|\psi_1\|^2 +\|\psi_2\|^2 
	= \sum_{x \in \mathbb{Z}} \frac{|\psi(x)|^2}{E \mp V(x)}.
\]
Hence, $\|K^\pm_E\| \leq |q|/(E - |V|_\infty)$
and the proof of the lemma is complete. 
\end{proof}

\section{Nontriviality of the birth eigenspaces}
In this section, we address
the problem when the birth eigenspaces
$\mathcal{B}_\pm$ defined in \eqref{birth} becomes nontrivial. 
To this end, we characterize $\mathcal{B}_\pm$. 

In the case of $|p| = 1$, 
$S$ becomes a constant matrix
and $U$ becomes a multiplication operator.  
In this case, 
the quantum walker never moves and 
hence the quantum walk becomes trivial
(see also \eqref{lazy} with $q=0$). 
To avoid this trivial case,
we suppose the following. 
\begin{Assumption}\label{ass1}
	$|p|\not =1$.
\end{Assumption}
\begin{Lemma}\label{thm:1-2}
Assume Hypothesis 1. Then 
\[
				\ker (S\pm 1)=\left\{ \begin{pmatrix}
				-\frac{q}{p\pm 1}L\psi \\
				\psi 	\end{pmatrix}
			~\Bigg|~ \psi \in \mathcal{K}
				\right\}.
\]
\end{Lemma}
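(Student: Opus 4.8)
The plan is to compute $\ker(S\pm 1)$ directly from the block-matrix representation of $S$ that was already established in the proof of Lemma \ref{exT}. Under the identification of $\mathcal{H}$ with $\mathcal{K}\oplus\mathcal{K}$, a state is written as $\Psi=\begin{pmatrix}\Psi_1\\ \Psi_2\end{pmatrix}$ with $\Psi_1,\Psi_2\in\mathcal{K}$, and
\[
S=\begin{pmatrix} p & qL \\ \bar q L^* & -p\end{pmatrix}.
\]
So $\Psi\in\ker(S\pm 1)$ is equivalent to the coupled pair of equations $(S\pm 1)\Psi=0$, namely
\[
(p\pm 1)\Psi_1 + qL\Psi_2 = 0,\qquad
\bar q L^*\Psi_1 +(-p\pm 1)\Psi_2 = 0.
\]

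First I would solve the first equation for $\Psi_1$ in terms of $\Psi_2$. Under Hypothesis \ref{ass1} we have $|p|\neq 1$, and since $p\in\mathbb{R}$ this means $p\pm 1\neq 0$, so the coefficient is invertible and
\[
\Psi_1 = -\frac{q}{p\pm 1}\,L\Psi_2.
\]
Setting $\psi:=\Psi_2$ (a free element of $\mathcal{K}$) then gives exactly the form in the statement, so the containment $\supseteq$ follows once I verify that such $\Psi$ actually lies in $\ker(S\pm 1)$, and the containment $\subseteq$ follows from the derivation. The only remaining issue is consistency: I must check that this $\Psi_1$ also satisfies the second equation automatically, for every $\psi\in\mathcal{K}$, rather than imposing an extra constraint on $\psi$.

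The main obstacle, then, is verifying this consistency, and it is where the self-adjointness and unitarity of $S$ should be exploited rather than brute-forced. Substituting $\Psi_1=-\frac{q}{p\pm 1}L\psi$ into the second equation yields
\[
\bar q L^*\!\left(-\frac{q}{p\pm 1}L\psi\right)+(\mp p + 1)\psi
=\left(-\frac{|q|^2}{p\pm 1}+(1\mp p)\right)\psi,
\]
using $L^*L=\mathrm{id}_\mathcal{K}$ (that is, the $\pm$ in $-p\pm 1$ matches the $\pm$ in the eigenvalue so the constant term is $1\mp p$, and I must keep the signs aligned carefully here). The scalar coefficient is $\frac{-|q|^2+(1\mp p)(p\pm 1)}{p\pm 1}$, and since $p^2+|q|^2=1$ one computes $(1\mp p)(p\pm 1)=\pm(1-p^2)=\pm|q|^2$; a careful sign check shows the numerator vanishes identically, so the second equation holds with no constraint on $\psi$. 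Thus $\psi\mapsto\begin{pmatrix}-\frac{q}{p\pm 1}L\psi\\ \psi\end{pmatrix}$ parametrizes $\ker(S\pm 1)$ exactly, completing both inclusions. I would be especially careful to track the two sign conventions (the $\pm$ of the eigenvalue and the $-p$ in the lower-right block) consistently throughout, as that is the only place an error could creep in.
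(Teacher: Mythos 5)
Your proof follows exactly the paper's route: identify $\mathcal{H}$ with $\mathcal{K}\oplus\mathcal{K}$, write $S\pm 1$ in block form, solve the first row for $\Psi_1=-\frac{q}{p\pm 1}L\Psi_2$ (legitimate since Hypothesis \ref{ass1} gives $p\pm 1\neq 0$), and then verify that the second row holds automatically using $p^2+|q|^2=1$. Both inclusions are organized just as in the paper's proof.

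However, precisely at the step you flagged as delicate, your sign bookkeeping goes wrong for the lower sign. The constant in the second row is $-p\pm 1=\pm 1-p=\pm(1\mp p)$, \emph{not} $\mp p+1=1\mp p$ as you wrote. Combining your transcription with your own (correct) identity $(1\mp p)(p\pm 1)=\pm|q|^2$, the numerator becomes $-|q|^2\pm|q|^2$, which vanishes only for the upper sign; for $\ker(S-1)$ it equals $-2|q|^2\neq 0$ (note $|q|^2=1-p^2>0$ under Hypothesis \ref{ass1}), which would force $\psi=0$ and contradict the lemma. So your assertion that ``the numerator vanishes identically'' is inconsistent with your own displayed computation. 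The repair is one line: with the correct constant,
\[
-\frac{|q|^2}{p\pm 1}+(\pm 1-p)
=\frac{-|q|^2+(\pm 1-p)(p\pm 1)}{p\pm 1}
=\frac{-|q|^2+(1-p^2)}{p\pm 1}=0
\]
for \emph{both} signs, since $(\pm 1-p)(p\pm 1)=\pm p+1-p^2\mp p=1-p^2$ carries no residual $\pm$. This is exactly the identity $-\frac{|q|^2}{p\pm 1}-p\pm 1=0$ that the paper invokes; with that single correction your argument coincides with the paper's proof.
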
 
\begin{proof}
Let 
$\Psi 
=\begin{pmatrix}
	\Psi _1\\
	\Psi _2
\end{pmatrix} \in \mathcal{H}$.
Because 
$
S\pm 1=
		\begin{pmatrix}
			p\pm 1&qL\\
			\bar{q}L^*&-p\pm 1
		\end{pmatrix}
$, we observe that $\Psi \in \ker( S \pm 1)$ if and only if
$\Psi_1$ and $\Psi_2$ belong to $\mathcal{K}$ and satisfy

\begin{equation}
\label{eq:lem4.1:1}
	\begin{cases}
		(p\pm 1)\Psi _1(x)+q\(L\Psi _2\)(x)=0,\\
		\bar{q}\(L^*\Psi _1\)(x)+(-p\pm 1)\Psi _2(x)=0
	\end{cases}
	\quad \mbox{for all $x \in \mathbb{Z}$.}
\end{equation}
By Hypothesis \ref{ass1}, 
\eqref{eq:lem4.1:1} is equivalent to
\[
	\begin{cases}
		\Psi _1(x)=-\frac{q}{p\pm 1}\(L\Psi _2\)(x),\\
		\( -\frac{|q|^2}{p\pm 1}-p\pm 1 \)\Psi _2(x)=0
	\end{cases}
	\quad \mbox{for all $x \in \mathbb{Z}$.}
\]
Because $p^2+|q|^2=1$ implies that
\[	
-\frac{|q|^2}{p\pm 1}-p\pm 1 =0,
\]
we obtain the desired result. 
\end{proof}
Combining Lemma \ref{thm:1-2} with \eqref{eq:3.1}
yields the following.
\begin{equation}\label{eq:2-2}
	\mathcal{B}_\pm
	=\Bigl\{ \Psi =\begin{pmatrix}
	-\frac{q}{p\pm 1}L\psi \\
	\psi 
\end{pmatrix}
~\Big|~ \psi \in \mathcal{K}, \
- q \overline{\chi _1}L\psi +(p\pm1) \overline{\chi _2}\psi=0
\Bigr\} .
\end{equation}
In order to provide a criterion for $\mathcal{B}_\pm$ 
to be nontrivial,
we suppose the following.  
\begin{Assumption}\label{ass2}
$\chi _1(x)\chi _2(x)\not =0$ for all $x \in \mathbb{Z}$.
\end{Assumption}
We define four constants 
$B_\pm$ and $b_\pm$
as 
\[ B_\pm = \max \{B_\pm(-\infty),  B_\pm(+\infty)\},
	\quad b_\pm = \min \{b_\pm(-\infty),  b_\pm(+\infty)\},  \]
where 
\begin{align*}
& 
B_\pm(- \infty)
= \limsup _{x\rightarrow -\infty }
\left| \frac{q{\chi _1}(x)}{\( p \pm 1 \) {\chi _2}(x)} \right|^2,
\quad
B_\pm( + \infty)
= \limsup _{x\rightarrow + \infty }
\left| \frac{\( p\pm1 \){\chi _2}(x)}{q{\chi _1}(x)}\right|^2,
\\[2mm]
& 
b_\pm(- \infty)
= \liminf_{x\rightarrow -\infty }
\left| \frac{q{\chi _1}(x)}{\( p \pm 1 \) {\chi _2}(x)} \right|^2,
\quad
b_\pm( + \infty)
= \liminf _{x\rightarrow + \infty }
\left| \frac{\( p\pm1 \){\chi _2}(x)}{q{\chi _1}(x)}\right|^2.
\end{align*}

We are now in a position to state our main result.
\begin{Theorem}\label{thm:3}
Assume Hypotheses \ref{ass1} and \ref{ass2}. 
\begin{itemize}
\item[(1)] If $B_\pm< 1$, then 
$\dim \mathcal{B}_\pm = 1$. 
\item[(2)] 
If $b_\pm> 1$, then $\mathcal{B}_\pm= \{0\}$.
\end{itemize}
\end{Theorem}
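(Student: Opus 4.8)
The plan is to work entirely from the explicit description \eqref{eq:2-2} of $\mathcal{B}_\pm$. The map $\psi \mapsto \Psi = \begin{pmatrix} -\frac{q}{p\pm 1}L\psi \\ \psi \end{pmatrix}$ is linear and injective, and since $L$ is unitary on $\mathcal{K}$ and $-q/(p\pm 1)$ is a nonzero constant (under Hypothesis \ref{ass1} one has $|p|<1$, so $q\neq 0$ and $p\pm 1\neq 0$), it carries $\ell^2$-sequences to $\mathcal{H}$ and only those. Hence $\dim\mathcal{B}_\pm$ equals the dimension of the space of $\psi\in\mathcal{K}$ obeying the constraint in \eqref{eq:2-2}, i.e.
\[ q\,\overline{\chi_1(x)}\,\psi(x+1) = (p\pm 1)\,\overline{\chi_2(x)}\,\psi(x), \quad x\in\mathbb{Z}. \]
Under Hypothesis \ref{ass2} we have $\chi_1(x)\chi_2(x)\neq 0$, so this is a nondegenerate first-order recurrence $\psi(x+1)=r_\pm(x)\psi(x)$ with $r_\pm(x):=(p\pm 1)\overline{\chi_2(x)}/(q\,\overline{\chi_1(x)})\neq 0$. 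Its solution space among all sequences is exactly one-dimensional, spanned by a sequence that (once normalized by $\psi(0)=1$) never vanishes, with $|\psi(x)|^2=\prod_{k=0}^{x-1}|r_\pm(k)|^2$ for $x\geq 1$ and $|\psi(x)|^2=\prod_{k=x}^{-1}|r_\pm(k)|^{-2}$ for $x\leq -1$. Everything reduces to whether this fixed sequence lies in $\ell^2(\mathbb{Z})$.

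For part (1) I would establish geometric decay at both ends. Since $B_\pm<1$ gives $B_\pm(+\infty)=\limsup_{x\to+\infty}|r_\pm(x)|^2<1$, there are $\rho<1$ and $N_+$ with $|r_\pm(x)|^2\leq\rho$ for $x\geq N_+$, so $|\psi(x)|^2\leq C_+\rho^{\,x}$ for large $x$. The definition of $B_\pm(-\infty)$ is written in the reciprocal form $q\chi_1/((p\pm 1)\chi_2)=1/r_\pm$, so $B_\pm(-\infty)=\limsup_{x\to-\infty}|r_\pm(x)|^{-2}<1$ gives symmetrically $|r_\pm(x)|^{-2}\leq\rho'<1$ for $x\leq N_-$, hence $|\psi(x)|^2\leq C_-(\rho')^{-x}$ as $x\to-\infty$. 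Summing the two geometric tails and the finite central block yields $\psi\in\ell^2(\mathbb{Z})$; thus the one-dimensional solution space sits inside $\ell^2$, and via the bijection above $\dim\mathcal{B}_\pm=1$.

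For part (2) I would show the nonzero solution escapes $\ell^2$. From $b_\pm>1$ we have in particular $b_\pm(+\infty)=\liminf_{x\to+\infty}|r_\pm(x)|^2>1$, so there are $\rho>1$ and $N_+$ with $|r_\pm(x)|^2\geq\rho$ for $x\geq N_+$; since $\psi$ is nowhere zero, $|\psi(x)|^2\geq c\,\rho^{\,x}\to\infty$, forcing $\psi\notin\ell^2$. Hence the only $\ell^2$-solution is $\psi=0$ and $\mathcal{B}_\pm=\{0\}$. (Note that $b_\pm(+\infty)>1$ alone suffices here; the full hypothesis $b_\pm>1$ also makes the other end blow up.)

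The estimates are routine geometric-series bounds; the one place demanding care is matching the $\limsup$/$\liminf$ bookkeeping to the two reciprocal forms appearing in $B_\pm(\pm\infty)$ and $b_\pm(\pm\infty)$. The reciprocal at $-\infty$ is exactly what makes \emph{both} halves of the condition $B_\pm<1$ encode decay rather than one decay and one growth. I would also flag that Hypothesis \ref{ass2} is used twice: to make the recurrence well posed, and to guarantee the solution is nowhere zero, which is precisely what lets the growth estimate in part (2) conclude non-summability.
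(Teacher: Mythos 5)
Your proposal is correct and takes essentially the same route as the paper: both reduce $\mathcal{B}_\pm$ via \eqref{eq:2-2} to the one-dimensional solution space of the nondegenerate first-order recurrence \eqref{eq:3-4}, spanned by a nowhere-vanishing sequence, and then decide its $\ell^2$-membership from the asymptotic ratios $B_\pm$ and $b_\pm$. Where the paper packages the convergence step as Lemma \ref{lem:4.3} combined with the ratio test, you simply unpack the same estimate into explicit geometric tail bounds (correctly handling the reciprocal form of the ratios at $-\infty$), so there is no substantive difference in method.
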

In order to prove Theorem \ref{thm:3},
we use the following lemma. 
\begin{Lemma}
\label{lem:4.3}
Assume Hypotheses \ref{ass1} and \ref{ass2}. 
Let $\psi:\mathbb{Z} \to \mathbb{C}$ 
be a nonzero solution to 
\begin{equation}
\label{eq:3-3} 
L\psi =\frac{\( p\pm1\) \overline{\chi_2}}{q\overline{\chi _1}}\psi.
\end{equation}
Then
\begin{align} 
\label{eq:B}
& B_\pm(-\infty) 
= \limsup_{x \to -\infty} \Bigl| \frac{\psi (x-1)}{\psi (x)} \Bigr|^2,
\quad
 B_\pm(+\infty) 
 = \limsup _{x\rightarrow +\infty}\Bigl| \frac{\psi (x+1)}{\psi (x)} \Bigr|^2, \\
\label{eq:b}
& b_\pm(-\infty) 
= \liminf_{x \to -\infty} \Bigl| \frac{\psi (x-1)}{\psi (x)} \Bigr|^2,
\quad
b_\pm(+\infty) 
 = \liminf _{x\rightarrow +\infty}\Bigl| \frac{\psi (x+1)}{\psi (x)} \Bigr|^2.
\end{align} 
\end{Lemma}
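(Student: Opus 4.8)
The plan is to read \eqref{eq:3-3} as a first-order linear recurrence for $\psi$ and then read off the ratios of consecutive values directly. Since $L$ is the left shift, \eqref{eq:3-3} says precisely that
\[ \psi(x+1) = \frac{(p\pm 1)\,\overline{\chi_2(x)}}{q\,\overline{\chi_1(x)}}\,\psi(x), \quad x\in\mathbb{Z}. \]
First I would verify that this multiplier never vanishes: Hypothesis \ref{ass1} forces $p\in(-1,1)$, so $p\pm 1\neq 0$ and $q\neq 0$, while Hypothesis \ref{ass2} gives $\chi_1(x)\chi_2(x)\neq 0$. Hence the coefficient is nonzero at every site, so a solution vanishing at one point vanishes at all points; a nonzero solution $\psi$ is therefore nonvanishing everywhere, and every quotient $\psi(x\pm 1)/\psi(x)$ occurring in \eqref{eq:B}--\eqref{eq:b} is well defined.

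For the $+\infty$ identities I would take the modulus squared of the recurrence to get
\[ \Bigl|\frac{\psi(x+1)}{\psi(x)}\Bigr|^2 = \Bigl|\frac{(p\pm 1)\chi_2(x)}{q\,\chi_1(x)}\Bigr|^2, \]
whose right-hand side is by definition exactly the quantity whose $\limsup$ (resp. $\liminf$) as $x\to+\infty$ equals $B_\pm(+\infty)$ (resp. $b_\pm(+\infty)$). Passing to $\limsup$ and to $\liminf$ on both sides yields the second equalities in \eqref{eq:B} and \eqref{eq:b} at once.

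For the $-\infty$ identities I would invert the recurrence, writing
\[ \Bigl|\frac{\psi(x)}{\psi(x+1)}\Bigr|^2 = \Bigl|\frac{q\,\chi_1(x)}{(p\pm 1)\chi_2(x)}\Bigr|^2, \]
and then substitute $x\mapsto x-1$ so that the left-hand side becomes $|\psi(x-1)/\psi(x)|^2$. Taking $\limsup$ (resp. $\liminf$) as $x\to-\infty$ and invoking the invariance of $\limsup_{x\to-\infty}$ and $\liminf_{x\to-\infty}$ under a unit shift of the index, the right-hand side tends to $B_\pm(-\infty)$ (resp. $b_\pm(-\infty)$) by definition, which gives the first equalities in \eqref{eq:B} and \eqref{eq:b}.

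I do not anticipate any serious obstacle. The only two points needing care are the nonvanishing of the coefficient, which is what makes the reciprocal and all the ratios legitimate, and the harmless index shift by one in the $-\infty$ case, which must be justified precisely by the shift-invariance of $\limsup_{x\to-\infty}$ and $\liminf_{x\to-\infty}$.
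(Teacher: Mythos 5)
Your proposal is correct and follows essentially the same route as the paper: rewrite \eqref{eq:3-3} as the pointwise recurrence $\psi(x+1)=\frac{(p\pm1)\overline{\chi_2}(x)}{q\overline{\chi_1}(x)}\psi(x)$, use Hypotheses \ref{ass1} and \ref{ass2} to see the coefficient never vanishes so a nonzero solution is nonvanishing, take modulus squared, and pass to $\limsup$/$\liminf$. If anything, you are slightly more careful than the paper on the $-\infty$ side, where the paper's backward recursion \eqref{eq:3-4} carries the coefficient at index $x$ rather than $x-1$ --- a one-step index discrepancy that your explicit appeal to shift-invariance of $\limsup_{x\to-\infty}$ and $\liminf_{x\to-\infty}$ resolves cleanly and that is in any case immaterial for the limits.
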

\begin{proof}
Because Hypotheses \ref{ass1} and \ref{ass2} 
imply that $p \pm 1$, $q$,
$\chi_1(x)$, and  $\chi_2(x)$ are not zero,
\eqref{eq:3-3} is equivalent to
\begin{equation} 
\label{eq:3-4}
\begin{cases}
		\displaystyle	\psi (x+1)=\frac{\( p\pm1 \)\overline{\chi _2}(x)}{q\overline{\chi _1}(x)}\psi (x),\ \ x\geq 0,\\
		\vspace{1pt}\\
		\displaystyle \psi (x-1)=\frac{q\overline{\chi _1}(x)}{\( p\pm1 \) \overline{\chi _2}(x)}\psi (x),\ \ x\leq 0.
	\end{cases}
\end{equation}
Since $\psi \not\equiv 0$,
\eqref{eq:3-4} implies that $\psi(x) \not=0$ for all $x \in \mathbb{Z}$. 
Hence,
\begin{align*}
&
\left|\frac{q{\chi _1}(x)}{\( p\pm1 \) {\chi _2}(x)} \right|^2
= \Bigl| \frac{\psi (x-1)}{\psi (x)} \Bigr|^2,
\quad x \leq 0, \\
&
\left|\frac{\( p\pm1 \){\chi _2}(x)}{q{\chi _1}(x)}
 \right|^2 
 = \Bigl| \frac{\psi (x+1)}{\psi (x)} \Bigr|^2,
 \quad x > 0.
 \end{align*}
Taking the limits of both sides, we obtain the desired result. 
\end{proof}
\begin{proof}[Proof of Theorem \ref{thm:3}]
By \eqref{eq:2-2}, 
$\Psi \in \mathcal{B}_\pm$
if and only if there exists a vector $\psi \in \mathcal{K}$ such that
$\Psi =\begin{pmatrix}
-\frac{q}{p\mp 1}L\psi \\
\psi 	\end{pmatrix}$ 
and 
$\psi$ satisfies \eqref{eq:3-3}. 
Now we suppose that $B_\pm < 1$. 
We define a function $\psi_0:\mathbb{Z} \to \mathbb{C}$ 
inductively as follows. 
Let $\psi_0(0) = 1$ and define $\psi_0(x)$ ($x\not=0$) 
by \eqref{eq:3-4}. 
Then, from the above argument, $\psi_0$ satisfies \eqref{eq:3-3}.  
By definition, $\psi_0 \not\equiv 0$. 
Hence, we have $\psi_0 \in \mathcal{K}$
by combining Lemma \ref{lem:4.3} with the ratio test.  
Thus, defining 
$\Psi_0 
=\begin{pmatrix}
-\frac{q}{p\mp 1}L\psi_0 \\
\psi_0 
\end{pmatrix}$,
we observe that $\Psi_0$ is nonzero 
and belongs to $\mathcal{B}_\pm$.  
Hence, $\mathcal{B}_\pm$ is nontrivial. 
If there is another nonzero vector  
$\Psi =\begin{pmatrix}
-\frac{q}{p\mp 1}L\psi \\
\psi 	\end{pmatrix} \in \mathcal{B}_\pm$,
then $\psi$ also satisfies \eqref{eq:3-4}. 
Taking a constant  $\alpha = \psi(0)/\psi_0(0)$,
we observe from \eqref{eq:3-4} that $\psi = \alpha \psi_0$.
Hence, $\dim \mathcal{B}_\pm = 1$. 
Thus, (1) is proved. 

We next suppose that $b_\pm > 1$ and 
$\Psi =\begin{pmatrix}
-\frac{q}{p\mp 1}L\psi \\
\psi 	\end{pmatrix} \in \mathcal{B}_\pm$ is nonzero. 
Similarly to the above,
the ratio test implies that  $\psi \not \in \mathcal{K}$. 
This contradicts $\Psi \in \mathcal{B}_\pm$.  
Hence, $\mathcal{B}_\pm = \{0\}$.  
Thus, (2) is proved. 
\end{proof}
%
%
%
\section{Properties of vectors in the birth eigenspaces}
Throughout this section, 
we suppose that Hypotheses \ref{ass1} and \ref{ass2} are satisfied. 
Summarizing the arguments in Sec. 4,
we observe that
\begin{equation}
\label{eq:5.1}
\mathcal{B}_\pm
= \left\{ 
\Psi = \begin{pmatrix} - \frac{q}{p \pm 1} L\psi \\ \psi \end{pmatrix}
~ \Bigg|~ 
\mbox{$\psi \in \mathcal{K}$ satisfies \eqref{eq:3-4}}
\right\}.
\end{equation}
\subsection{Exponential decay}
We prove that the birth eigenvector $\Psi \in \mathcal{B}_\pm$ 
decays exponentially at spatial infinity.  
\begin{Theorem}
Suppose that $B_\pm < 1$ and $\Psi \in \mathcal{B}_\pm$. 
Then, there exist
positive constants $c_\pm$, $c_\pm^\prime$,
$\kappa_\pm$, $\kappa_\pm^\prime$, and $R_\pm > 0$ 
such that
\begin{equation} 
\label{eq_bounds}
\kappa_\pm^\prime e^{- c_\pm^\prime |x|} 
\leq \|\Psi(x)\|_{\mathbb{C}^2}^2 \leq \kappa_\pm e^{- c_\pm |x|},
	\quad |x| \geq R_\pm. 
\end{equation}
\end{Theorem}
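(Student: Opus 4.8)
The plan is to reduce the two–sided bound on $\|\Psi(x)\|_{\mathbb C^2}^2$ to a two–sided exponential bound on the scalar sequence $|\psi(x)|^2$, where $\psi\in\mathcal K$ is the function that parametrizes $\Psi$ through \eqref{eq:5.1}. Writing $\Psi(x)=\bigl(-\tfrac{q}{p\pm1}\psi(x+1),\ \psi(x)\bigr)^{\mathrm T}$, one has
\[
\|\Psi(x)\|_{\mathbb C^2}^2=\frac{|q|^2}{|p\pm1|^2}\,|\psi(x+1)|^2+|\psi(x)|^2 .
\]
From this the lower estimate is immediate in the crude form $\|\Psi(x)\|_{\mathbb C^2}^2\ge|\psi(x)|^2$, so for the lower bound it suffices to bound $|\psi(x)|^2$ from below; for the upper bound the weight $|q|^2/|p\pm1|^2$ is a fixed positive constant, so it is enough to control $|\psi(x)|^2$ and $|\psi(x+1)|^2$ at once. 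Hence everything comes down to exponential upper and lower bounds on $|\psi(x)|^2$.

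First I would treat the upper bound, which is a clean consequence of $B_\pm<1$. By \eqref{eq:5.1} the sequence $\psi$ obeys the transfer relation \eqref{eq:3-4}, so Lemma \ref{lem:4.3} gives $\limsup_{x\to+\infty}|\psi(x+1)/\psi(x)|^2=B_\pm(+\infty)$ and $\limsup_{x\to-\infty}|\psi(x-1)/\psi(x)|^2=B_\pm(-\infty)$. Since $B_\pm=\max\{B_\pm(-\infty),B_\pm(+\infty)\}<1$, fix $\rho$ with $B_\pm<\rho^2<1$. Then there is $R_\pm$ such that $|\psi(x+1)|\le\rho|\psi(x)|$ for $x\ge R_\pm$ and $|\psi(x-1)|\le\rho|\psi(x)|$ for $x\le -R_\pm$; iterating from $\pm R_\pm$ yields $|\psi(x)|^2\le \mathrm{const}\cdot\rho^{2|x|}$ for $|x|\ge R_\pm$, and consecutive values obey the same geometric bound. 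Substituting into the displayed identity gives the upper bound in \eqref{eq_bounds} with $c_\pm=-2\log\rho$ and a suitable $\kappa_\pm$.

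The lower bound is the delicate point and is where I expect the main obstacle to lie. The same iteration gives $|\psi(x)|^2=|\psi(\pm R_\pm)|^2\prod |\psi(k+1)/\psi(k)|^{2}$, so a bound $|\psi(x)|^2\ge\kappa_\pm'e^{-c_\pm'|x|}$ requires the one–step ratios to be bounded \emph{below} by a fixed positive constant, i.e. $\liminf_{x\to+\infty}|\psi(x+1)/\psi(x)|^2>0$ and $\liminf_{x\to-\infty}|\psi(x-1)/\psi(x)|^2>0$. By Lemma \ref{lem:4.3} these liminfs are exactly $b_\pm(+\infty)$ and $b_\pm(-\infty)$, so the lower bound amounts to the positivity $b_\pm>0$, and this is the essential point: it is not a formal consequence of $B_\pm<1$ alone. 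Indeed, using $|\chi_2(x)/\chi_1(x)|^2=(1-a(x))/(1+a(x))$ one sees that a transfer ratio degenerates to $0$ precisely when $a(x)$ approaches the endpoint of $(-1,1)$ opposite to the one already excluded by $B_\pm<1$. The crux is therefore to confine $a(x)$ to a compact subinterval of $(-1,1)$ for all large $|x|$; this is secured in the asymptotically coin–convergent setting of the paper, where $C(x)$ tends to limits satisfying Hypothesis \ref{ass2}, and it forces the one–step ratios into a fixed band $[\rho',\rho]\subset(0,1)$. Granting this, iterating the lower ratio bound from $\pm R_\pm$ gives $|\psi(x)|^2\ge\kappa_\pm'e^{-c_\pm'|x|}$ with $c_\pm'=-2\log\rho'$, and combined with $\|\Psi(x)\|_{\mathbb C^2}^2\ge|\psi(x)|^2$ this completes \eqref{eq_bounds}.
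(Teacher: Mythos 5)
Your argument follows the same route as the paper's own proof: parametrize $\Psi\in\mathcal{B}_\pm$ by $\psi$ through \eqref{eq:5.1}, reduce \eqref{eq_bounds} to two-sided geometric bounds on the one-step transfer ratios in \eqref{eq:3-4} via Lemma \ref{lem:4.3}, and iterate from a threshold $R_\pm$; your upper bound is the paper's argument with $\rho^2=B_\pm(\pm\infty)+\epsilon$, and your crude estimate $\|\Psi(x)\|_{\mathbb{C}^2}^2\ge|\psi(x)|^2$ is exactly how the paper gets the lower bound from a bound on $|\psi|$ alone. The one substantive point of divergence is the lower bound, and there your diagnosis is correct and worth stating plainly: the paper's proof opens with ``taking $\epsilon$ as $0<\epsilon<b_\pm(+\infty)$'', which tacitly presupposes $b_\pm(+\infty)>0$ (and likewise $b_\pm(-\infty)>0$), a positivity that follows neither from $B_\pm<1$ nor from Hypotheses \ref{ass1} and \ref{ass2}. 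The lower bound can genuinely fail under the stated hypotheses: if the transfer ratio $\bigl|(p\pm1)\overline{\chi_2}(x)/(q\overline{\chi_1}(x))\bigr|$ equals $e^{-x}$ for large $x>0$ (achievable, since $|\chi_2(x)/\chi_1(x)|^2=(1-a(x))/(1+a(x))$ sweeps all of $(0,\infty)$ while Hypothesis \ref{ass2} stays intact), then $B_\pm(+\infty)=0<1$ yet $|\psi(x)|\sim e^{-x^2/2}$ decays super-exponentially, so no bound $\kappa'_\pm e^{-c'_\pm|x|}\le\|\Psi(x)\|_{\mathbb{C}^2}^2$ can hold. So the ``obstacle'' you flagged is a real gap in the paper's proof, not an artifact of your approach.

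One caveat about your repair: you secure $b_\pm>0$ by invoking convergence of $C(x)$ to limits satisfying Hypothesis \ref{ass2}, but that convergence is assumed only in Subsection 5.2 (Theorem \ref{thm:5.2}); the standing hypotheses for this theorem are just Hypotheses \ref{ass1}, \ref{ass2}, and $B_\pm<1$. Strictly speaking, you therefore prove the statement under an extra assumption. The minimal fix is to add $b_\pm(\pm\infty)>0$ to the hypotheses (automatic in the convergent-coin setting, where $B_\pm=b_\pm=\beta_\pm$); granted that, your iteration reproduces the paper's computation with $c'_\pm(+\infty)=-\log\left(b_\pm(+\infty)-\epsilon\right)$, and the remaining inequality the paper needs, $0<b_\pm(+\infty)-\epsilon<1$, is unproblematic because $b_\pm(\pm\infty)\le B_\pm(\pm\infty)<1$. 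With that supplementary hypothesis made explicit, your proof is complete and coincides with the paper's.
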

\begin{proof}
Let $\Psi \in \mathcal{B}_\pm$. 
From \eqref{eq:5.1}, 
there exists a $\psi \in \mathcal{K}$ such that
$\Psi  = \begin{pmatrix} - \frac{q}{p \pm 1} L\psi \\ \psi \end{pmatrix} \in \mathcal{B}_\pm$ and 
$\psi$ satisfy \eqref{eq:3-4}. 

We first prove the right-hand side of \eqref{eq_bounds}. 
Because $\|\Psi(x)\|_{\mathbb{C}^2}^2 
= (|q|^2/(p \pm 1)^2) |\psi(x+1)|^2 + |\psi(x)|^2$,
it suffices to prove that
\[ |\psi(x)|^2 \leq 
\kappa_\pm e^{- c_\pm |x|},
	\quad |x| \geq R_\pm \]
with some $\kappa_\pm$, $c_\pm$, and $R_\pm > 0$. 
Let $\epsilon$ satisfy $0 < \epsilon < 1- B_\pm(+\infty)$. 
By the definition of $B_\pm(+\infty)$,
there exists $x_0 \in \mathbb{N}$ such that 
if $x \geq x_0$,
\[
0 \leq \sup_{y \geq x} 
\left|\frac{\( p\pm1 \)\overline{\chi _2}(y)}
{q\overline{\chi _1}(y)}\right|^2 -  B_\pm(+\infty)  
\leq  \epsilon.
\]
Suppose that $x \geq x_0$. By \eqref{eq:3-4}, 
\begin{align*}
|\psi(x)| 
& = 
\left|\frac{\( p\pm1 \)\overline{\chi _2}(x-1)}
{q\overline{\chi _1}(x-1)}\right||\psi(x-1)| \\
& =
\left|\frac{\( p\pm1 \)\overline{\chi _2}(x-1)}
{q\overline{\chi _1}(x-1)}\right|
\cdot
\left|\frac{\( p\pm1 \)\overline{\chi _2}(x-2)}
{q\overline{\chi _1}(x-2)}\right|
\cdot \cdots \cdot
\left|\frac{\( p\pm1 \)\overline{\chi _2}(x_0)}
{q\overline{\chi _1}(x_0)}\right| |\psi(x_0)| \\
& \leq
(B_\pm(+\infty) + \epsilon)^{(x-x_0)/2} |\psi(x_0)|
\end{align*}
Since $B_\pm(+\infty) + \epsilon < 1$,
\[ c_\pm(+\infty) 
	:= - 
		\log (B_\pm(+\infty) + \epsilon) >0. \]
Hence, if $x \geq x_0$,
\begin{align*}
|\psi(x)|^2 \leq \kappa_\pm(+\infty)
	e^{- c_\pm(+\infty) x}
\end{align*}
with $ \kappa_\pm(+\infty) := |\psi(x_0)|^2 e^{ c_\pm(+\infty)x_0 }$.
Similarly, taking $0 < \epsilon  < 1- B_\pm(-\infty)$ and
$c_\pm(-\infty) = - \log(B_\pm(-\infty)+\epsilon)$,
we can prove that there exists $x_1 \in \mathbb{N}$  such that
if $x \leq - x_1$,
\begin{align*}
|\psi(x)|^2 \leq \kappa_\pm(-\infty)
	e^{- c_\pm(-\infty) |x|}
\end{align*}
with $ \kappa_\pm(-\infty) := |\psi(x_1)|^2 e^{ c_\pm(-\infty)|x_1| }$.
Tanking 
$c_\pm = \min\{c_\pm(-\infty), c_\pm(+\infty) \}$, 
$\kappa_\pm = \max\{\kappa_\pm(-\infty),$ \\ 
$\kappa_\pm(+\infty) \}$,
$R_\pm := \max \{x_0, x_1\}$ yields the right-hand side
of \eqref{eq_bounds}.  

Next we prove the left-hand side of \eqref{eq_bounds}. 
Taking $\epsilon$ as $0 < \epsilon < b_\pm(+\infty)$,
we have 
\[ 0 \leq b_\pm(+ \infty) 
- \inf_{y \geq x}
\left|\frac{\( p\pm1 \)\overline{\chi _2}(y)}
{q\overline{\chi _1}(y)}\right|^2
\leq \epsilon
\]
and
\[ |\psi(x)| 
\geq
(b_\pm(+\infty) - \epsilon)^{(x-x_0)/2} |\psi(x_0)|
\]
for $x$ greater than  some $x_0 >0$. 
Because $0 < b_\pm(+\infty) - \epsilon < 1$,
\[ c_\pm^\prime(+\infty) 
	:= - 
		\log (b_\pm(+\infty) - \epsilon) >0, \]
we observe from the same argument as above that
\[ |\psi(x)|^2 \geq \kappa_\pm^\prime(+\infty)
	e^{- c_\pm^\prime(+\infty) x},
	\quad x \geq x_0 \]
with some $ \kappa_\pm^\prime(+\infty) > 0$. 
Thus, the-right hand side of \eqref{eq_bounds} is proved for $x > 0$ sufficiently large. 
For $x \leq 0$ sufficiently small, 
the same argument works. 
Therefore, we complete the proof. 
\end{proof}
\subsection{Robustness against perturbations}
In this subsection, we illustrate 
the robustness of the birth eigenspaces against perturbations. 
For simplicity, 
we focus here on the case in which the limits 
$\lim_{x \to \pm \infty} C(x)$ exist. 
This is a slight generalization of 
the anisotropic quantum walk introduced 
in \cite{RiSuTi17a, RiSuTi17b}, 
where the authors addressed the case of $p=0$. 
We address the case in which 
$p$ can be  nonzero. 

Let $C_{\pm \infty}\in U(2)$ be self-adjoint unitary matrices
with ${\rm det}\, C_{\pm \infty} = -1$ 
and choose normalized vectors 
$\chi_{\pm \infty}
= \begin{pmatrix} \chi_{{\pm \infty},1} \\ \chi_{{\pm \infty}, 2} \end{pmatrix}$ 
such that
\[ C_{\pm \infty} = 2|\chi_{\pm \infty} \rangle \langle \chi_{\pm \infty}|-1. \]
We define two constants $\beta_+$ and $\beta_-$ as
\[ \beta_+ = \max \{\beta_+(-\infty),  \beta_+(+\infty)\},
	\quad \beta_- = \min \{\beta_-(-\infty),  \beta_-(+\infty)\},  \]
where
\begin{align*}
& 
\beta_\pm(- \infty)
= 
\left| \frac{q \chi _{-\infty,1}}{\( p\pm 1 \) \chi _{-\infty,2}} \right|^2,
\quad
\beta_\pm( + \infty)
= 
\left| \frac{\( p\pm 1 \) \chi _{+\infty,2}}{q\chi _{+\infty,1}}\right|^2.
\end{align*}

\begin{Theorem}
\label{thm:5.2}
Let $C(x)$ be defined in \eqref{def_coin} and satisfy in addition
$\lim_{x \to \pm \infty} C(x) = C_{\pm \infty}$.
Then $B_\pm = b_\pm = \beta_\pm $. 
In particular, the following hold. 
\begin{itemize}
\item[(1)] If $\beta_\pm< 1$, then 
$\dim \mathcal{B}_\pm = 1$. 
\item[(2)] 
If $\beta_\pm> 1$, then $\mathcal{B}_\pm = \{0\}$.
\end{itemize}
\end{Theorem}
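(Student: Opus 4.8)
The plan is to obtain Theorem~\ref{thm:5.2} from Theorem~\ref{thm:3} by showing that the hypothesis $\lim_{x\to\pm\infty}C(x)=C_{\pm\infty}$ upgrades the one-sided conditions $B_\pm<1$ and $b_\pm>1$ into a single sharp threshold at $\beta_\pm=1$. The mechanism is that an honest limit, as opposed to a bare $\limsup$ or $\liminf$, forces the limit superior and the limit inferior in the definitions of $B_\pm(\pm\infty)$ and $b_\pm(\pm\infty)$ to coincide and to equal the explicit numbers $\beta_\pm(\pm\infty)$; granting this, the rest is bookkeeping with Lemma~\ref{lem:4.3}.

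First I would record the dependence of $\chi(x)$ on $C(x)$ in phase-free form. Solving $\left(C(x)-1\right)\chi(x)=0$ for $C(x)$ of the shape \eqref{def_coin} gives $\chi_1(x)/\chi_2(x)=b(x)/\left(1-a(x)\right)$, hence after normalization
\[
|\chi_1(x)|^2=\frac{1+a(x)}{2},\qquad |\chi_2(x)|^2=\frac{1-a(x)}{2}.
\]
Consequently the ratio
\[
\left|\frac{\chi_1(x)}{\chi_2(x)}\right|^2=\frac{1+a(x)}{1-a(x)}
\]
is an explicit continuous function of the single real entry $a(x)$ and is insensitive to the unimodular ambiguity in $\chi(x)$; the four expressions entering \eqref{eq:B} and \eqref{eq:b} are just this ratio, or its reciprocal, times the fixed constants built from $q$ and $p\pm1$. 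Writing $a_{\pm\infty}$ for the $(1,1)$ entry of $C_{\pm\infty}$, the hypothesis $C(x)\to C_{\pm\infty}$ forces $a(x)\to a_{\pm\infty}$, so the ratio converges to $\left(1+a_{\pm\infty}\right)/\left(1-a_{\pm\infty}\right)=|\chi_{\pm\infty,1}/\chi_{\pm\infty,2}|^2$. Because a genuine limit exists in each direction, there the limit superior equals the limit inferior, giving $B_\pm(\pm\infty)=b_\pm(\pm\infty)=\beta_\pm(\pm\infty)$; combining these directional identities with the definitions of $B_\pm$, $b_\pm$, and $\beta_\pm$ then yields the asserted $B_\pm=b_\pm=\beta_\pm$.

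Granting the identity, parts~(1) and~(2) are immediate: $\beta_\pm<1$ gives $B_\pm<1$, so Theorem~\ref{thm:3}(1) yields $\dim\mathcal{B}_\pm=1$; and $\beta_\pm>1$ gives $b_\pm>1$, so Theorem~\ref{thm:3}(2) yields $\mathcal{B}_\pm=\{0\}$. The sole substantive step is therefore the collapse of $\limsup$ onto $\liminf$ in the previous paragraph, that is, the transfer of convergence from the matrices $C(x)$ to the eigenvector ratios, and I expect this to be the only real obstacle. Its delicate aspect is the phase indeterminacy of $\chi(x)$: one must argue throughout with the phase-invariant moduli $|\chi_1(x)|^2$ and $|\chi_2(x)|^2$ rather than with $\chi(x)$ itself, and one must keep the limiting coins non-degenerate ($a_{\pm\infty}\neq\pm1$, equivalently $\chi_{\pm\infty,1}\chi_{\pm\infty,2}\neq0$) so that $\beta_\pm(\pm\infty)$ is well defined. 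The explicit formulas $|\chi_1(x)|^2=(1+a(x))/2$ and $|\chi_2(x)|^2=(1-a(x))/2$ dispose of both and reduce everything else to elementary limit arithmetic.
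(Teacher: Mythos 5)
Your proposal is correct and takes essentially the same route as the paper: the paper's proof likewise observes that $\lim_{x \to \pm\infty}C(x) = C_{\pm\infty}$ gives $\lim_{x\to\pm\infty}|\chi_j(x)| = |\chi_{\pm\infty,j}|$, so that the limit superior and inferior collapse and $B_\pm = b_\pm = \beta_\pm$, after which Theorem~\ref{thm:3} finishes the argument. Your only addition is the explicit phase-free computation $|\chi_1(x)|^2 = (1+a(x))/2$, $|\chi_2(x)|^2 = (1-a(x))/2$, which rigorously justifies the transfer of convergence from $C(x)$ to the eigenvector moduli — a step the paper asserts without proof.
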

\begin{proof}
By assumption, $\lim_{\infty \to \pm x}|\chi_j(x)|= |\chi_{\pm\infty,j}|$
and hence $B_\pm = b_\pm = \beta_\pm$.  
Therefore, by Theorem \ref{thm:3}, we obtain the desired result. 
\end{proof}
We write $\mathcal{B}_\pm(C)$
for the birth eigenspaces $\mathcal{B}_\pm$ 
to make the dependence on the coin $C$ explicit. 
The following is a direct consequence of Theorem \ref{thm:5.2} 
and reveals the robustness of the birth eigenspaces
against perturbations.  
See Examples in the next section for more details. 
\begin{Corollary}
\label{cor:5.3}
Let $C(x)$ be as  in Theorem \ref{thm:5.2} 
and $C^\prime(x)$ 
satisfy the same condition as $C(x)$, so that
$\lim_{x \to \pm \infty}  \left(C^\prime(x) - C(x) \right) = 0$.  
\begin{itemize}
\item[(i)] If $\beta_\pm < 1$, then 
$\dim \mathcal{B}_\pm(C^\prime) = \dim \mathcal{B}_\pm(C) = 1$.
\item[(ii)] If $\beta_\pm > 1$, then 
$ \mathcal{B}_\pm(C^\prime) =  \mathcal{B}_\pm(C) = \{0\}$.
\end{itemize}
\end{Corollary}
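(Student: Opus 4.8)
The plan is to exploit the fact that the constants $\beta_\pm$ are boundary data: by their very definition they depend only on the asymptotic matrices $C_{\pm\infty}$, through the normalized $(+1)$-eigenvectors $\chi_{\pm\infty}$, and never on the values of the coin at finite sites. Since $C$ and $C'$ share the same limits at spatial infinity, they share the same $\beta_\pm$, so the dichotomy of Theorem \ref{thm:5.2} applies verbatim to both walks and yields identical conclusions.

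First I would check that $C'$ has the same asymptotic matrices as $C$. From $\lim_{x\to\pm\infty}(C'(x) - C(x)) = 0$ and $\lim_{x\to\pm\infty} C(x) = C_{\pm\infty}$, the algebra of limits gives $\lim_{x\to\pm\infty} C'(x) = C_{\pm\infty}$. Hence $C'$ satisfies the hypothesis of Theorem \ref{thm:5.2} with the very same $C_{\pm\infty}$, and in particular the normalized $(+1)$-eigenvector $\chi_{\pm\infty}$ entering the definition of $\beta_\pm$ is one and the same for the two coins.

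Next I would apply Theorem \ref{thm:5.2} separately to $C$ and to $C'$. For $C$ it gives $B_\pm = b_\pm = \beta_\pm$ together with the statement that $\dim\mathcal{B}_\pm(C) = 1$ when $\beta_\pm < 1$ and $\mathcal{B}_\pm(C) = \{0\}$ when $\beta_\pm > 1$. For $C'$ it gives the identical identities and dichotomy, because the value of $\beta_\pm$ computed from $C_{\pm\infty}$ is unchanged. Treating the two cases $\beta_\pm < 1$ and $\beta_\pm > 1$ in turn then yields (i) and (ii) at once.

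The step requiring the most care is not analytic but a matter of bookkeeping: one must ensure that the standing hypotheses under which Theorem \ref{thm:5.2} was derived, namely Hypothesis \ref{ass1} on $S$ and Hypothesis \ref{ass2} on the non-vanishing of the eigenvector components, hold for the perturbed coin $C'$ as well as for $C$. Hypothesis \ref{ass1} concerns only the fixed shift $S$ and is therefore automatic. Hypothesis \ref{ass2}, that is $\chi_1'(x)\chi_2'(x) \neq 0$ for all $x$, must be understood as part of the phrase ``satisfy the same condition as $C(x)$''; once this is granted, no further work is needed and the corollary follows immediately from Theorem \ref{thm:5.2}.
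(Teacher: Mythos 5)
Your proposal is correct and follows essentially the same route as the paper: the paper's own proof simply observes that the conclusions of Theorem \ref{thm:5.2} depend only on the limits $C_{\pm\infty}$, which $C$ and $C'$ share. Your extra bookkeeping (the algebra-of-limits step, the remark that $\beta_\pm$ involves only the moduli $|\chi_{\pm\infty,j}|$ so the eigenvector's phase ambiguity is harmless, and the check that Hypotheses \ref{ass1} and \ref{ass2} carry over to $C'$) fills in details the paper leaves implicit, but does not change the argument.
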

\begin{proof}
Because the assertions of Theorem \ref{thm:5.2}
depend only on the limit of the coin operator,
and both $C^\prime(x)$ and $C(x)$ have the same limits,
we obtain the desired result.  
\end{proof}
%
%
%
\section{Examples}
In this section, we provide examples. 

\subsection{An anisotropic coin model}

Let $\varepsilon>0$ and define
\[
	C_{+\infty}:=\begin{pmatrix}
		1-2\varepsilon^2&2\varepsilon \sqrt{1-\varepsilon^2}\\
		2\varepsilon \sqrt{1-\varepsilon^2}&2\varepsilon^2-1
	\end{pmatrix},\hspace{10pt}C_{-\infty}:=\begin{pmatrix}
	2\varepsilon^2-1&2\varepsilon \sqrt{1-\varepsilon^2}\\
	2\varepsilon \sqrt{1-\varepsilon^2}&1-2\varepsilon^2
\end{pmatrix}.
\]
We can choose $\chi_{\pm\infty} \in \ker (C_{\pm \infty} -1)$ as follows. 
\[
	\chi _{+\infty} =\begin{pmatrix}
		\sqrt{1-\varepsilon^2}\\
		\varepsilon
	\end{pmatrix},\hspace{10pt}
	\chi _{-\infty} =\begin{pmatrix}
	\varepsilon\\
	\sqrt{1-\varepsilon^2}
\end{pmatrix}
\]
are eigenvectors of $C_{\pm \infty}$ 
corresponding to the eigenvalues $1$.
Let $\{\chi (x) \}\subset \C^2$ 
be a family of normalized vectors and satisfy 
$\chi_1(x)\chi_2(x) \not=0$ and
$\lim_{x \to \pm \infty} \chi(x) = \chi_{\pm \infty}$. 
Then $C(x) :=2|\chi (x)\rangle \langle \chi (x)|-1$ satisfies
\begin{equation}
\label{eq:6.0}
	\lim _{x\rightarrow \pm\infty}C(x)=C_{\pm \infty}.
\end{equation}
By direct calculation,
\begin{equation}
\label{eq:6.1}
\beta_\pm(-\infty)
= g(\epsilon) \frac{1 \mp p}{1 \pm p}  ,
\qquad 
\beta_\pm(+\infty)	
=  g(\epsilon)  \frac{1 \pm p}{1 \mp p},
\end{equation}
where $g(\epsilon) := \epsilon^2/(1-\epsilon^2)$. 
\begin{Theorem}
\label{thm:6.1}
Let $\epsilon_0 \in (0,1)$ be a unique solution to
\begin{equation}
\label{eq:6.2}
g(\epsilon) = \min \left\{ \frac{1 - p}{1 + p},
	\frac{1 + p}{1 - p} \right\}.
\end{equation}
\begin{itemize}
\item[(1)] If $\epsilon < \epsilon_0$, 
then $\dim \mathcal{B}_+ = \dim \mathcal{B}_- = 1$.
\item[(2)] If $\epsilon > \epsilon_0$,
then $\mathcal{B}_+ = \mathcal{B}_- = \{0\}$. 
\end{itemize}
\end{Theorem}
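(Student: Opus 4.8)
The plan is to reduce the statement to (i) the strict monotonicity of $g$ and (ii) the sharp $\ell^2$-dichotomy for the generator of $\mathcal{B}_\pm$, exploiting that in the anisotropic model the coin has genuine limits $\lim_{x\to\pm\infty}\chi(x)=\chi_{\pm\infty}$, so the relevant ratios converge rather than merely having $\limsup$/$\liminf$ behaviour.

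First I would record that $g(\epsilon)=\epsilon^2/(1-\epsilon^2)$ is a strictly increasing bijection of $(0,1)$ onto $(0,\infty)$; hence \eqref{eq:6.2} has a unique root $\epsilon_0\in(0,1)$, and for $\epsilon\in(0,1)$ one has $\epsilon\lessgtr\epsilon_0$ exactly when $g(\epsilon)\lessgtr\min\{(1-p)/(1+p),(1+p)/(1-p)\}$. Next, from \eqref{eq:6.1} and the fact that $(1-p)/(1+p)$ and $(1+p)/(1-p)$ are reciprocals, I would compute
\[
B_\pm=\max\{\beta_\pm(-\infty),\beta_\pm(+\infty)\}
 = g(\epsilon)\,\max\Bigl\{\frac{1-p}{1+p},\,\frac{1+p}{1-p}\Bigr\},
\]
the \emph{same} value for both signs (the $\pm$ merely interchanges which endpoint realizes the maximum). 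Consequently $B_\pm<1\iff g(\epsilon)<\min\{\cdots\}$ and $B_\pm>1\iff g(\epsilon)>\min\{\cdots\}$, which is precisely the threshold appearing in \eqref{eq:6.2}.

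For (1), the hypothesis $\epsilon<\epsilon_0$ gives $g(\epsilon)<\min\{\cdots\}$, hence $B_\pm<1$ for both signs, and Theorem~\ref{thm:3}(1) immediately yields $\dim\mathcal{B}_\pm=1$. For (2) I expect the real work, because the triviality half of the earlier criteria is stated through the $\liminf$-type constant $b_\pm=\min\{\cdots\}$, which only reaches the weaker threshold $\max\{\cdots\}$ and so misses the intermediate range $\min\{\cdots\}<g(\epsilon)<\max\{\cdots\}$; in particular Theorem~\ref{thm:5.2}(2) closes only the $\mathcal{B}_+$ case directly. To cover both signs at the single threshold $\epsilon_0$, I would argue directly on the generator $\psi$ of $\mathcal{B}_\pm$ from \eqref{eq:3-4}. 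Since $\lim_{x\to\pm\infty}\chi(x)=\chi_{\pm\infty}$ exists, the ratios $|\psi(x+1)/\psi(x)|^2$ and $|\psi(x-1)/\psi(x)|^2$ converge as $x\to+\infty$ and $x\to-\infty$ to $\beta_\pm(+\infty)$ and $\beta_\pm(-\infty)$, respectively. When $\epsilon>\epsilon_0$ we have $B_\pm>1$, so at least one of these limits exceeds $1$; on the corresponding tail $|\psi(x)|$ grows geometrically, whence $\psi\notin\ell^2(\mathbb{Z})=\mathcal{K}$. By \eqref{eq:5.1} no nonzero vector can then lie in $\mathcal{B}_\pm$, so $\mathcal{B}_\pm=\{0\}$.

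The crux is exactly this last step: the exponential-growth half of the dichotomy must be run with \emph{honest limits}, so that $B_\pm>1$ (rather than merely $b_\pm>1$) forces $\psi\notin\mathcal{K}$. This is the feature that makes the single constant $\epsilon_0$ of \eqref{eq:6.2} govern the fate of $\mathcal{B}_+$ and $\mathcal{B}_-$ simultaneously, despite $\beta_+$ and $\beta_-$ being attained at opposite spatial ends.
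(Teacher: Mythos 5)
Your proof is correct, and for part (2) it takes a genuinely different route from the paper --- one that is in fact more careful than the paper's own. The paper's proof is three lines: strict monotonicity of $g$ on $(0,1)$ with limits $0$ and $+\infty$ gives the unique root $\epsilon_0$; from \eqref{eq:6.1} it computes $\beta_+=g(\epsilon)\max\bigl\{\tfrac{1-p}{1+p},\tfrac{1+p}{1-p}\bigr\}=\beta_-$, so that $\beta_\pm<1$ iff $g(\epsilon)<g(\epsilon_0)$; it then delegates both conclusions to Theorem \ref{thm:5.2}. Your treatment of $\epsilon_0$, of the reciprocal symmetry in \eqref{eq:6.1}, and of part (1) coincides with this. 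For part (2) your suspicion is exactly right: the triviality half of Theorem \ref{thm:3} requires $b_\pm=g(\epsilon)\min\bigl\{\tfrac{1-p}{1+p},\tfrac{1+p}{1-p}\bigr\}>1$, i.e.\ $g(\epsilon)>\max\{\cdots\}$, so for $p\neq0$ the intermediate window $\min\{\cdots\}<g(\epsilon)<\max\{\cdots\}$ is not reached by it; moreover the identification $B_\pm=b_\pm=\beta_\pm$ asserted in Theorem \ref{thm:5.2} cannot hold literally when $p\neq0$, since then $\beta_\pm(-\infty)\neq\beta_\pm(+\infty)$ and the maximum and minimum of the two limits differ. Your direct argument --- the recursion \eqref{eq:3-4} forces $\psi(x)\neq0$ for all $x$, the honest limits of the coin give $|\psi(x+1)/\psi(x)|^2\to\beta_\pm(+\infty)$ and $|\psi(x-1)/\psi(x)|^2\to\beta_\pm(-\infty)$, and a single end with limiting ratio $>1$ already yields geometric growth of one tail, hence $\psi\notin\mathcal{K}$ and $\mathcal{B}_\pm=\{0\}$ by \eqref{eq:5.1} --- is precisely the one-sided sharpening of Theorem \ref{thm:3}(2) (``ratio bounded below by a constant $>1$ at one end suffices'') that the paper's citation chain implicitly presupposes. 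What each approach buys: the paper's is maximally short by black-boxing Theorem \ref{thm:5.2}; yours costs a paragraph but is self-contained and genuinely closes the gap on the window $g(\epsilon_0)<g(\epsilon)<1/g(\epsilon_0)$, making transparent why the single threshold $\epsilon_0$ governs $\mathcal{B}_+$ and $\mathcal{B}_-$ simultaneously even though $\beta_+$ and $\beta_-$ are realized at opposite spatial ends.
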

\begin{Remark}
Combining Theorem \ref{thm:6.1} with Proposition \ref{prop:smt},
we observe that
\[ \dim \ker(U \mp 1) \geq \dim \mathcal{B}_\pm = 1
	\quad \mbox{for $\epsilon < \epsilon_0$}. \]
Moreover, the above statement is independent of
the choice of $C(x) = 2|\chi(x)\rangle \langle \chi(x)|-1$ 
that satisfies \eqref{eq:6.0}. 
See Corollary \ref{cor:5.3}.  
In the case of the two phase quantum walk 
with
\[ C(x) = \begin{cases} C_{+\infty}, & x > 0, \\ 
	C_{-\infty}, & x \leq 0 \end{cases} \]
and $\epsilon < 1/\sqrt{2}$, 
Theorem \ref{thm:eig} implies that if $|q| < 2\epsilon^2$,
then $\pm 1 \not \in \sigma_{\rm p}(T)$. 
Hence, 
\[ \dim \ker(U \mp 1) = 1
	\quad \mbox{for $\epsilon < \min \{\epsilon_0, 1/\sqrt{2} \}$}. \]
\end{Remark}
\begin{proof}
Because $g$ is strictly increasing in $(0, 1)$
with 
$\lim_{\epsilon  \downarrow 0}g(\epsilon) = 0$
and $\lim_{\epsilon \uparrow 1}g(\epsilon) = + \infty$,
there is a unique solution $\epsilon_0$ to \eqref{eq:6.2} in $(0,1)$. 
By \eqref{eq:6.1}, 
\[ \beta_+ =  g(\epsilon)  \max\left\{ \frac{1 - p}{1 + p},
	\frac{1 + p}{1 - p} \right\} = \beta_-. 
\]
Hence, $\beta_\pm < 1$ if and only if 
\[ g(\epsilon) <  \min\left\{ \frac{1 - p}{1 + p},
	\frac{1 + p}{1 - p} \right\} = g(\epsilon_0). \]
Therefore, Theorem \ref{thm:5.2} provides the desired results.
\end{proof}
\subsection{Kitagawa's split-step quantum walk}
Here we slightly generalize the one discussed in Example \ref{ex:SS}.  
Let $p= \sin (\theta_2/2)$ and $q = \cos (\theta_2/2)$
with $\theta_2 \in [-2\pi,2\pi]$. 
Let $\theta_1:\mathbb{Z} \to [0,2\pi)$ be a function
and define $C(x)$ as in \eqref{def_coin} with 
$a(x) = -\sin (\theta_1(x)/2)$ 
and $b(x) = \cos (\theta_1(x)/2)$.
Similarly to the argument in  Example \ref{ex:SS}, 
$U$ is unitarily equiavalent to $U_{\rm ss}(\theta_1(\cdot), \theta_2)$.
In this case, we can take a normalized vector $\chi(x) \in \ker(C(x)-1)$
as
\begin{equation}
\label{chi} 
\chi(x) = 
	\frac{1}{\sqrt{2(1+\sin (\theta_1(x)/2))}}
	\begin{pmatrix} 
	\cos(\theta_1(x)/2) \\
	1 + \sin (\theta_1(x)/2)
	 \end{pmatrix}. 
\end{equation} 
By definition, if $\theta_2, \theta_1(x) \not=  \pi$,
then Hypotheses 1 and 2 are satisfied. 
Suppose that the limits
$\theta_{\pm \infty} 
:= \lim_{x \to \pm \infty} \theta_1(x) \in [0,2 \pi) 
\setminus \{\pi\}$
exist. 
We define $\chi_{\pm \infty}$ as in \eqref{chi} 
with $\theta_1(x)$ replaced by $\theta_{\pm \infty}$. 
By direct calculation, 
\begin{equation}
\label{eq:6.5}
\beta_\pm(-\infty) 
= 
	\frac{1-\sin(\theta_{- \infty}/2)}{1+\sin(\theta_{- \infty}/2)} 
	\frac{1 \mp \sin(\theta_2/2)}
		{1 \pm \sin(\theta_2/2)}
	,
\quad
\beta_\pm(+\infty) 
= 
	\frac{1+\sin(\theta_{+ \infty}/2)}{1-\sin(\theta_{+ \infty}/2)} 
	\frac{1 \pm \sin(\theta_2/2)}
	{1 \mp \sin(\theta_2/2)}. 
\end{equation}
\begin{Theorem}
\label{thm:6.2}
\begin{itemize}
\item[(1)] If
$\sin (\theta_{-\infty}/2) < \sin (\theta_{+\infty}/2)$,
then $\mathcal{B}_\pm = \{0\}$.
\item[(2)] If
$\sin (\theta_{-\infty}/2) > \sin (\theta_{+\infty}/2)$,
then the following hold:
\begin{itemize}
\item[$\bullet$] If $\mp \sin (\theta_2/2) 
\in (\sin (\theta_{+\infty}/2),  \sin (\theta_{-\infty}/2))$,
then $\dim \mathcal{B}_\pm = 1$;
\item[$\bullet$] If $\mp \sin (\theta_2/2) < \sin (\theta_{+\infty}/2)$
or $\sin (\theta_{-\infty}/2)) < \mp \sin (\theta_2/2) $, 
then $\mathcal{B}_\pm = \{0\}$.
\end{itemize}
\end{itemize}
\end{Theorem}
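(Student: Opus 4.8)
The plan is to read off the two directional ratios $\beta_\pm(-\infty)$ and $\beta_\pm(+\infty)$ from \eqref{eq:6.5} and to feed them into the dichotomy behind Theorem \ref{thm:5.2}: the generating solution of the recursion \eqref{eq:3-4} lies in $\mathcal{K}$ (so that $\dim \mathcal{B}_\pm = 1$) precisely when it decays at both spatial infinities, i.e. when $\beta_\pm(-\infty) < 1$ and $\beta_\pm(+\infty) < 1$, whereas it fails to be square summable — so that $\mathcal{B}_\pm = \{0\}$ — as soon as one of the two ratios exceeds $1$. Abbreviating $s_\pm := \sin(\theta_{\pm\infty}/2) \in [0,1)$ and $\tau := \sin(\theta_2/2) = p \in (-1,1)$, I would first rewrite \eqref{eq:6.5} as
\[
\beta_\pm(-\infty) = \frac{1-s_-}{1+s_-}\cdot\frac{1\mp\tau}{1\pm\tau},
\qquad
\beta_\pm(+\infty) = \frac{1+s_+}{1-s_+}\cdot\frac{1\pm\tau}{1\mp\tau}.
\]

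The device that organizes the whole proof is that the product of the two directional ratios does not depend on $\tau$:
\[
\beta_\pm(-\infty)\,\beta_\pm(+\infty)
= \frac{(1-s_-)(1+s_+)}{(1+s_-)(1-s_+)},
\]
and an elementary sign computation shows this product exceeds $1$ exactly when $s_- < s_+$. Since a product of two positive numbers can be larger than $1$ only if their maximum is larger than $1$, the hypothesis $s_- < s_+$ of part (1) forces $\max\{\beta_\pm(-\infty),\beta_\pm(+\infty)\} > 1$ for both signs; one of the two ratios is then $>1$ and Theorem \ref{thm:5.2} gives $\mathcal{B}_\pm = \{0\}$. This settles part (1) in a single stroke and, notably, independently of $\tau$.

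For part (2) the product is $<1$, so the outcome is genuinely governed by $\tau$ and I would resolve the sign of each factor separately. Clearing denominators gives the equivalences
\[
\beta_\pm(-\infty) < 1 \iff \mp\tau < s_-,
\qquad
\beta_\pm(+\infty) < 1 \iff \mp\tau > s_+,
\]
so that both ratios are simultaneously $<1$ exactly when $\mp\tau \in (s_+, s_-)$, a nonempty interval precisely because $s_- > s_+$; here Theorem \ref{thm:5.2}(1) yields $\dim \mathcal{B}_\pm = 1$. Conversely, $\mp\tau < s_+$ violates the second equivalence and forces $\beta_\pm(+\infty) > 1$, while $\mp\tau > s_-$ violates the first and forces $\beta_\pm(-\infty) > 1$; in either case a ratio exceeds $1$ and Theorem \ref{thm:5.2}(2) gives $\mathcal{B}_\pm = \{0\}$. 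This matches the two bullet points.

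The only delicate point — and the main obstacle — is the sign bookkeeping, since the factors $\frac{1\mp\tau}{1\pm\tau}$ and $\frac{1\pm\tau}{1\mp\tau}$ interchange their roles between the $-\infty$ and $+\infty$ limits and must be tracked consistently with the birth-space sign through both equivalences. As an internal check I would verify that the two nontriviality windows, $-\tau \in (s_+,s_-)$ for $\mathcal{B}_+$ and $+\tau \in (s_+,s_-)$ for $\mathcal{B}_-$, are mirror images of each other through $\tau = 0$, and that the borderline cases $\mp\tau \in \{s_+,s_-\}$ (where a ratio equals $1$) fall outside the scope of Theorem \ref{thm:5.2} and are therefore rightly excluded from the statement.
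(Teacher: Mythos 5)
Your proposal is correct, and its skeleton is the same as the paper's: the paper's entire proof consists of deriving from \eqref{eq:6.5} the two equivalences (a) $\beta_\pm(-\infty)<1$ iff $\mp\sin(\theta_2/2)<\sin(\theta_{-\infty}/2)$ and (b) $\beta_\pm(+\infty)<1$ iff $\sin(\theta_{+\infty}/2)<\mp\sin(\theta_2/2)$ --- exactly your cleared-denominator computations, with the same sign bookkeeping --- and then appealing to Theorem \ref{thm:5.2}. You diverge in two places, both to your advantage. First, for part (1) the paper leaves implicit that $s_-<s_+$ (in your notation $s_\pm=\sin(\theta_{\pm\infty}/2)$, $\tau=\sin(\theta_2/2)$) forces some directional ratio above $1$; your $\tau$-independent product identity $\beta_\pm(-\infty)\,\beta_\pm(+\infty)=\frac{(1-s_-)(1+s_+)}{(1+s_-)(1-s_+)}>1$ settles this in one stroke and automatically covers the borderline values $\mp\tau\in\{s_-,s_+\}$ inside case (1), where one ratio equals $1$ but the other is then strictly larger. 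Second, and more substantively, you state the operative dichotomy in its correct form: nontriviality needs \emph{both} directional ratios below $1$, while triviality follows as soon as \emph{one} of them exceeds $1$. This care is actually needed: Theorem \ref{thm:3}(2) as literally stated requires $b_\pm=\min\{b_\pm(-\infty),b_\pm(+\infty)\}>1$, which fails in the mixed regimes $\mp\tau<s_+$ or $\mp\tau>s_-$ of part (2) (there one ratio is $>1$ and the other $<1$), and the definitions preceding Theorem \ref{thm:5.2} inconsistently mix $\max$ (for $\beta_+$) and $\min$ (for $\beta_-$); the paper's bare citation of Theorem \ref{thm:5.2} glosses over this. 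Your justification via the generating solution of \eqref{eq:3-4} and the one-sided ratio test is precisely the mechanism by which Lemma \ref{lem:4.3} and Theorem \ref{thm:3} are proved, so your argument is complete where the paper's citation is loose. Your closing checks (the mirror symmetry of the two windows through $\tau=0$ and the exclusion of the equality cases, where a ratio equals $1$ and no conclusion is drawn) are consistent with the statement.
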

\begin{proof}
From \eqref{eq:6.5}, we obtain the following assertions.
\begin{itemize}
\item[(a)] $\beta_\pm(-\infty) < 1$ if and only if 
$\mp \sin (\theta_2/2) < \sin (\theta_{-\infty}/2)$. 
\item[(b)] $\beta_\pm(+\infty) < 1$ if and only if 
$\sin (\theta_{+\infty}/2)  <  \mp \sin (\theta_2/2)$. 
\end{itemize}
Combining (a) and (b) with Theorem \ref{thm:5.2},
we obtain the desired results. 
\end{proof}
\begin{Remark}
In the case of  (2) in Theorem \ref{thm:6.2},
we observe the following. 
\begin{itemize}
\item If both  $-\sin (\theta_2/2)$ and $+\sin (\theta_2/2)$ are in  
$(\sin (\theta_{+\infty}/2),  \sin (\theta_{-\infty}/2))$,
then 
\[ \dim \mathcal{B}_+ = \dim \mathcal{B}_- = 1. \] 
\item If $\mp \sin (\theta_2/2) \in (\sin (\theta_{+\infty}/2),  \sin (\theta_{-\infty}/2))$ and 
$\pm \sin (\theta_2/2) \not\in [\sin (\theta_{+\infty}/2),  \sin (\theta_{-\infty}/2)]$, then 
\[ \dim \mathcal{B}_\pm = 1, \quad \mathcal{B}_\mp = \{0\}. \]   
\end{itemize}
\end{Remark}
\noindent
\section*{Acknowledgements}
This work was supported by Grant-in-Aid for Young Scientists (B) (No. 26800054).

\end{document}